\title{On the Behaviours Produced by Instruction Sequences under
  Execution}
\author{J.A. Bergstra \and C.A. Middelburg}
\institute{Informatics Institute, Faculty of Science,
           University of Amsterdam, \\
           Science Park~904, 1098~XH Amsterdam, the Netherlands \\
           \email{J.A.Bergstra@uva.nl,C.A.Middelburg@uva.nl}}
\begin{document}

\maketitle

\begin{abstract}
We study several aspects of the behaviours produced by instruction
sequences under execution in the setting of the algebraic theory of
processes known as \ACP.
We use \ACP\ to describe the behaviours produced by instruction
sequences under execution and to describe two protocols implementing
these behaviours in the case where the processing of instructions takes
place remotely.
We also show that all finite-state behaviours considered in \ACP\ can
be produced by instruction sequences under execution.
\end{abstract}

\begin{keywords}
instruction sequence, remote instruction processing,
instruction sequence producible process
\end{keywords}

\section{Introduction}
\label{sect-intro}

The concept of an instruction sequence is a very primitive concept in
computing.
It has always played a central role in computing because of the fact
that execution of instruction sequences underlies virtually all past
and current generations of computers.
It happens that, given a precise definition of an appropriate notion
of an instruction sequence, many issues in computer science can be
clearly explained in terms of instruction sequences.
A simple yet interesting example is that a program can simply be
defined as a text that denotes an instruction sequence.
Such a definition corresponds to an empirical perspective found among
practitioners.

In theoretical computer science, the meaning of programs usually plays
a prominent part in the explanation of many issues concerning programs.
Moreover, what is taken for the meaning of programs is mathematical by
nature.
On the other hand, it is customary that practitioners do not fall back
on the mathematical meaning of programs in case explanation of issues
concerning programs is needed.
They phrase their explanations from an empirical perspective.
An empirical perspective that we consider appealing is the perspective
that a program is in essence an instruction sequence and an instruction
sequence under execution produces a behaviour that is controlled by its
execution environment in the sense that each step performed actuates
the processing of an instruction by the execution environment and a
reply returned at completion of the processing determines how the
behaviour proceeds.

This paper concerns the behaviours produced by instruction sequences
under execution as such and two issues relating to the behaviours
produced by instruction sequences under execution, namely the issue of
implementing these behaviours in the case where the processing of
instructions takes place remotely and the issue of the extent to which
the behaviours considered in process algebra can be produced by
instruction sequences under execution.

Remote instruction processing means that a stream of instructions to be
processed arises at one place and the processing of that stream of
instructions is handled at another place.
This phenomenon is increasingly encountered.
It is found if loading the instruction sequence to be executed as a
whole is impracticable.
For instance, the storage capacity of the execution unit is too small
or the execution unit is too far away.
Remote instruction processing requires special attention because the
transmission time of the messages involved in remote instruction
processing makes it hard to keep the execution unit busy without
intermission.

In the literature on computer architecture, hardly anything can be
found that contributes to a sound understanding of the phenomenon of
remote instruction processing.
As a first step towards such an understanding, we give rigorous
descriptions of two protocols for remote instruction processing at a
level of abstraction that captures the underlying essence of the
protocols.
One protocol is very simple, but makes no effort keep the execution
unit busy without intermission.
The other protocol is more complex and is directed towards keeping the
execution unit busy without intermission.
It is reminiscent of an instruction pre-fetching mechanism as found in
pipelined processors (see e.g.~\cite{HP03a}), but its range of
application is not restricted to pipelined instruction processing.

The work presented in this paper belongs to a line of research which
started with an attempt to approach the semantics of programming
languages from the perspective mentioned above.
The first published paper on this approach is~\cite{BL00a}.
That paper is superseded by~\cite{BL02a} with regard to the groundwork
for the approach: program algebra, an algebraic theory of single-pass
instruction sequences, and basic thread algebra, an algebraic theory of
mathematical objects that represent in a direct way the behaviours
produced by instruction sequences under execution.%
\footnote
{In~\cite{BL02a}, basic thread algebra is introduced under the name
 basic polarized process algebra.
}
The main advantages of the approach are that it does not require a lot
of mathematical background and that it is more appealing to
practitioners than the main approaches to programming language
semantics: the operational approach, the denotational approach and the
axiomatic approach.
For an overview of these approaches, see e.g.~\cite{Mos06a}.

The work presented in this paper is based on the instruction sequences
considered in program algebra and the representation of the behaviours
produced by instruction sequences under execution considered in basic
thread algebra.
It is rather awkward to describe and analyse the behaviours of this
kind using algebraic theories of processes such as
\ACP~\cite{BW90,BK84b}, CCS~\cite{HM85,Mil89} and
CSP~\cite{BHR84,Hoa85}.
However, the objects considered in basic thread algebra can be viewed
as representations of processes as considered in process algebra.
This allows for the protocols for remote instruction processing to be
described using \ACP\ or rather \ACPt, an extension of \ACP\ which
supports abstraction from internal actions.

Process algebra is an area of the study of concurrency which is
considered relevant to computer science, as is witnesses by the extent
of the work on algebraic theories of processes such as \ACP, CCS and
CSP in theoretical computer science.
This strongly hints that there are programmed systems whose behaviours
can be taken for processes as considered in process algebra.
Therefore, it is interesting to know to which extent the behaviours
considered in process algebra can be produced by programs under
execution, starting from the perception of a program as an instruction
sequence.
In this paper, we will show that, by apposite choice of instructions,
all finite-state processes can be produced by instruction sequences
(provided that the cluster fair abstraction rule, see e.g.\
Section~5.6 of ~\cite{Fok00}, is valid).

The instruction sequences considered in program algebra are single-pass
instruction sequences, i.e.\ finite or infinite sequences of
instructions of which each instruction is executed at most once and can
be dropped after it has been executed or jumped over.
Program algebra does not provide a notation for programs that is
intended for actual programming: programs written in an assembly
language are finite instruction sequences for which single-pass
execution is usually not possible.
We will also show that all finite-state processes can as well be
produced by programs written in a program notation which is close to
existing assembly languages.

Instruction sequences under execution may make use of services provided
by their execution environment such as counters, stacks and Turing
tapes.
The use operators added to basic thread algebra in e.g.~\cite{BM07g}
can be used to describe the behaviours produced by instruction
sequences under execution that make use of services.
Interesting is that instruction sequences under execution that make use
of services may produce infinite-state processes.
On that account, we will make precise what processes are produced by
instruction sequences under execution that make use of services
provided by their execution environment.

As a continuation of the work on a new approach to programming language
semantics mentioned above, the notion of an instruction sequence was
subjected to systematic and precise analysis using the groundwork laid
earlier.
This led among other things to expressiveness results about the
instruction sequences considered and variations of the instruction
sequences considered (see e.g.~\cite{BM07g,BM08b,BM08h,BP09a,PZ06a}).
Instruction sequences are under discussion for many years in diverse
work on computer architecture, as witnessed by
e.g.~\cite{Bak91a,BH97a,HJP82a,Lun77a,NH97a,OH00a,PD80a,TW07a,XT96a},
but the notion of an instruction sequence has never been subjected to
any precise analysis before.
As another continuation of the work on a new approach to programming
language semantics mentioned above, selected issues relating to
well-known subjects from the theory of computation and the area of
computer architecture were rigorously investigated thinking in terms of
instruction sequences (see e.g.~\cite{BM09c,BM08g,BM07c,BM09i,BM09k}).
The subjects from the theory of computation, namely the halting problem
and non-uniform computational complexity, are usually investigated
thinking in terms of a common model of computation such as Turing
machines and Boolean circuits (see e.g.~\cite{AB09a,Her65a,Sip06a}).
The subjects from the area of computer architecture, namely instruction
sequence performance, instruction set architectures and remote
instruction processing, are usually not investigated in a rigorous way
at all.
The general aim of the work in both continuations mentioned is to bring
instruction sequences as a theme in computer science better into the
picture.
The work presented in this paper forms a part of the last mentioned
continuation.

This paper is organized as follows.
The body of the paper consists of three parts.
The first part (Sections~\ref{sect-PGA}--\ref{sect-process-extr})
concerns the behaviours produced by instruction sequences under
execution as such and includes surveys of program algebra, basic thread
algebra and the algebraic theory of processes known as ACP.
The second part
(Sections~\ref{sect-simple-protocol}--\ref{sect-adaptations}) concerns
the issue of implementing these behaviours in the case where the
processing of instructions takes place remotely and includes rigorous
descriptions of two protocols for remote instruction processing.
The third part (Sections~\ref{sect-pci}--\ref{sect-PGLD-BR}) concerns
the issue of the extent to which the behaviours considered in process
algebra can be produced by instruction sequences under execution and
includes the result that, by apposite choice of instructions, all
finite-state processes can be produced by instruction sequences.

This paper consolidates material from~\cite{BM08i,BM09e,BM09c}.

\section{Program Algebra}
\label{sect-PGA}

In this section, we review \PGA\ (ProGram Algebra).
The starting-point of program algebra is the perception of a program as
a single-pass instruction sequence.
The concepts underlying the primitives of program algebra are common in
programming, but the particular form of the primitives is not common.
The predominant concern in the design of program algebra has been to
achieve simple syntax and semantics, while maintaining the expressive
power of arbitrary finite control.

In \PGA, it is assumed that a fixed but arbitrary set $\BInstr$
of \emph{basic instructions} has been given.
The intuition is that the execution of a basic instruction may modify a
state and produces a reply at its completion.
The possible replies are the Boolean values $\True$ and $\False$.

\PGA\ has the following \emph{primitive instructions}:
\begin{itemize}
\item
for each $a \in \BInstr$, a \emph{plain basic instruction} $a$;
\item
for each $a \in \BInstr$, a \emph{positive test instruction} $\ptst{a}$;
\item
for each $a \in \BInstr$, a \emph{negative test instruction} $\ntst{a}$;
\item
for each $l \in \Nat$, a \emph{forward jump instruction} $\fjmp{l}$;
\item
a \emph{termination instruction} $\halt$.
\end{itemize}
We write $\PInstr$ for the set of all primitive instructions of \PGA.
On execution of an instruction sequence, these primitive instructions
have the following effects:
\begin{itemize}
\item
the effect of a positive test instruction $\ptst{a}$ is that basic
instruction $a$ is executed and execution proceeds with the next
primitive instruction if $\True$ is produced and otherwise the next
primitive instruction is skipped and execution proceeds with the
primitive instruction following the skipped one --- if there is no
primitive instructions to proceed with, inaction occurs;
\item
the effect of a negative test instruction $\ntst{a}$ is the same as the
effect of $\ptst{a}$, but with the role of the value produced reversed;
\item
the effect of a plain basic instruction $a$ is the same as the effect of
$\ptst{a}$, but execution always proceeds as if $\True$ is produced;
\item
the effect of a forward jump instruction $\fjmp{l}$ is that execution
proceeds with the $l$-th next instruction of the program concerned ---
if $l$ equals $0$ or there is no primitive instructions to proceed with,
inaction occurs;
\item
the effect of the termination instruction $\halt$ is that execution
terminates.
\end{itemize}

\PGA\ has the following constants and operators:
\begin{itemize}
\item
for each $u \in \PInstr$, an \emph{instruction} constant $u$\,;
\item
the binary \emph{concatenation} operator ${}\conc{}$\,;
\item
the unary \emph{repetition} operator $\,\rep$\,.
\end{itemize}
We assume that there is a countably infinite set of variables which
includes $x,y,z$.
Terms are built as usual.
We use infix notation for concatenation and postfix notation for
repetition.

A closed \PGA\ term is considered to denote a non-empty, finite or
eventually periodic infinite sequence of primitive instructions.%
\footnote
{An eventually periodic infinite sequence is an infinite sequence with
 only finitely many distinct suffixes.}
The instruction sequence denoted by a closed term of the form
$t \conc t'$ is the instruction sequence denoted by $t$ concatenated
with the instruction sequence denoted by $t'$.
The instruction sequence denoted by a closed term of the form $t\rep$
is the instruction sequence denoted by $t$ concatenated infinitely
many times with itself.
Some simple examples of closed \PGA\ terms are
\begin{ldispl}
a \conc b \conc c\;,
\qquad
\ptst{a} \conc \fjmp{2} \conc \fjmp{3} \conc b \conc \halt\;,
\qquad
a \conc (b \conc c)\rep\;.
\end{ldispl}%
On execution of the instruction sequence denoted by the first term, the
basic instructions $a$, $b$ and $c$ are executed in that order and
after that inaction occurs.
On execution of the instruction sequence denoted by the second term,
the basic instruction $a$ is executed first, if the execution of $a$
produces the reply $\True$, the basic instruction $b$ is executed next
and after that execution terminates, and if the execution of $a$
produces the reply $\False$, inaction occurs.
On execution of the instruction sequence denoted by the third term, the
basic instruction $a$ is executed first, and after that the basic
instructions $b$ and $c$ are executed in that order repeatedly forever.

Closed \PGA\ terms are considered equal if they represent the same
instruction sequence.
The axioms for instruction sequence equivalence are given in
Table~\ref{axioms-PGA}.%
\begin{table}[!t]
\caption{Axioms of \PGA}
\label{axioms-PGA}
\begin{eqntbl}
\begin{axcol}
(x \conc y) \conc z = x \conc (y \conc z)              & \axiom{PGA1} \\
(x^n)\rep = x\rep                                      & \axiom{PGA2} \\
x\rep \conc y = x\rep                                  & \axiom{PGA3} \\
(x \conc y)\rep = x \conc (y \conc x)\rep              & \axiom{PGA4}
\end{axcol}
\end{eqntbl}
\end{table}
In this table, $n$ stands for an arbitrary positive natural number.
The term $t^n$, where $t$ is a \PGA\ term, is defined by induction on
$n$ as follows: $t^1 = t$ and $t^{n+1} = t \conc t^n$.
The \emph{unfolding} equation $x\rep = x \conc x\rep$ is derived as
follows:
\begin{ldispl}
\begin{deriv}
x\rep & = & (x \conc x)\rep         & \axiom{PGA2} \\
      & = & x \conc (x \conc x)\rep & \axiom{PGA4} \\
      & = & x \conc x\rep           & \axiom{PGA2}\;.
\end{deriv}
\end{ldispl}%
Each closed \PGA\ term is derivably equal to a term in
\emph{canonical form}, i.e.\ a term of the form $t$ or
$t \conc {t'}\rep$, where $t$ and $t'$ are closed \PGA\ terms in which
the repetition operator does not occur.
For example:
\begin{ldispl}
(a \conc b)\rep \conc c \conc \halt =
a \conc (b \conc a)\rep\;,
\\
\ptst{a} \conc (\fjmp{4} \conc b \conc
                  (\ntst{c} \conc \fjmp{5} \conc \halt)\rep)\rep
=
\ptst{a} \conc \fjmp{4} \conc b \conc
(\ntst{c} \conc \fjmp{5} \conc \halt)\rep\;.
\end{ldispl}%

The initial models of \PGA\ are considered its standard models.
Henceforth, we restrict ourselves to the initial model $\Ipga$ of \PGA\
in which:
\begin{itemize}
\item
the domain is the set of all non-empty, finite and eventually periodic
infinite sequences over the set $\PInstr$ of primitive instructions;
\item
the operation associated with ${} \conc {}$ is concatenation;
\item
the operation associated with ${}\rep$ is the operation ${}\srep$
defined as follows:
\begin{itemize}
\item
if $F$ is a finite sequence over $\PInstr$, then $F\srep$ is the unique
eventually periodic infinite sequence $F'$ such that $F$ concatenated $n$
times with itself is a proper prefix of $F'$ for each $n \in \Nat$;
\item
if $F$ is an eventually periodic infinite sequence over $\PInstr$, then
$F\srep$ is $F$.
\end{itemize}
\end{itemize}

In the sequel, we use the term \emph{instruction sequence} for the
elements of the domain of $\Ipga$, and we denote the interpretations of
the constants and operators of \PGA\ in $\Ipga$ by the constants and
operators themselves.
$\Ipga$ is loosely called \emph{the} initial model of \PGA\ because all
initial models of \PGA\ are isomorphic, i.e.\ there exist bijective
homomorphism between them (see e.g.~\cite{ST99a,Wir90a}).

\section{Basic Thread Algebra}
\label{sect-BTA}

In this section, we review \BTA\ (Basic Thread Algebra).
\BTA\ is an algebraic theory of mathematical objects that represent in
a direct way the behaviours produced by instruction sequences under
execution.
The objects concerned are called threads.

In \BTA, it is assumed that a fixed but arbitrary set $\BAct$ of
\emph{basic actions}, with $\Tau \notin \BAct$, has been given.
Besides, $\Tau$ is a special basic action.
We write $\BActTau$ for $\BAct \union \set{\Tau}$.
A thread performs basic actions in a sequential fashion.
Upon each basic action performed, a reply from an execution environment
determines how it proceeds.
The possible replies are the Boolean values $\True$ and~$\False$.
Performing $\Tau$, which is considered performing an internal action,
always leads to the reply $\True$.

Although \BTA\ is one-sorted, we make this sort explicit.
The reason for this is that we will extend \BTA\ with an additional sort
in Section~\ref{sect-services}.

\BTA\ has one sort: the sort $\Thr$ of \emph{threads}.
To build terms of sort $\Thr$, it has the following constants and
operators:
\begin{itemize}
\item
the \emph{inaction} constant $\const{\DeadEnd}{\Thr}$;
\item
the \emph{termination} constant $\const{\Stop}{\Thr}$;
\item
for each $a \in \BActTau$, the binary \emph{postconditional composition}
operator $\funct{\pccop{a}}{\Thr \x \Thr}{\Thr}$.
\end{itemize}
We assume that there are infinitely many variables of sort $\Thr$,
including $x,y,z$.
Terms of sort $\Thr$ are built as usual.
We use infix notation for the postconditional composition operators.
We introduce \emph{basic action prefixing} as an abbreviation:
$a \bapf t$, where $a \in \BActTau$ and $t$ is a term of sort $\Thr$,
abbreviates $\pcc{t}{a}{t}$.

The thread denoted by a closed term of the form $\pcc{t}{a}{t'}$ will
first perform $a$, and then proceed as the thread denoted by $t$ if the
reply from the execution environment is $\True$ and proceed as the
thread denoted by $t'$ if the reply from the execution environment is
$\False$.
The threads denoted by $\DeadEnd$ and $\Stop$ will become inactive and
terminate, respectively.
Some simple examples of closed \BTA\ terms are
\begin{ldispl}
a \bapf (\pcc{\Stop}{b}{\DeadEnd})\;, \qquad
\pcc{(b \bapf \Stop)}{a}{\DeadEnd}\;.
\end{ldispl}%
The first term denotes the thread that first performs basic action $a$,
next performs basic action $b$, if the reply from the execution
environment on performing $b$ is $\True$, after that terminates, and if
the reply from the execution environment on performing $b$ is $\False$,
after that becomes inactive.
The second term denotes the thread that first performs basic action
$a$, if the reply from the execution environment on performing $a$ is
$\True$, next performs the basic action $b$ and after that terminates,
and if the reply from the execution environment on performing $a$ is
$\False$, next becomes inactive.

\BTA\ has only one axiom.
This axiom is given in Table~\ref{axioms-BTA}.%
\begin{table}[!tb]
\caption{Axiom of \BTA}
\label{axioms-BTA}
\begin{eqntbl}
\begin{axcol}
\pcc{x}{\Tau}{y} = \pcc{x}{\Tau}{x}                    & \axiom{T1}
\end{axcol}
\end{eqntbl}
\end{table}
Using the abbreviation introduced above, axiom T1 can be written as
follows: $\pcc{x}{\Tau}{y} = \Tau \bapf x$.

Notice that each closed \BTA\ term denotes a thread that will become
inactive or terminate after it has performed finitely many actions.
Infinite threads can be described by guarded recursion.

A \emph{guarded recursive specification} over \BTA\ is a set of
recursion equations $E = \set{X = t_X \where X \in V}$, where $V$ is a
set of variables of sort $\Thr$ and each $t_X$ is a \BTA\ term of the
form $\DeadEnd$, $\Stop$ or $\pcc{t}{a}{t'}$ with $t$ and $t'$ that
contain only variables from $V$.
We write $\vars(E)$ for the set of all variables that occur in $E$.
We are only interested in models of \BTA\ in which guarded recursive
specifications have unique solutions, such as the projective limit model
of \BTA\ presented in~\cite{BB03a}.

A simple example of a guarded recursive specification is the one
consisting of following two equations:
\begin{ldispl}
x = \pcc{x}{a}{y}\;, \qquad
y = \pcc{y}{b}{\Stop}\;.
\end{ldispl}%
The $x$-component of the solution of this guarded recursive
specification is the thread that first performs basic action $a$
repeatedly until the reply from the execution environment on performing
$a$ is $\False$, next performs basic action $b$ repeatedly until the
reply from the execution environment on performing $b$ is $\False$, and
after that terminates.

For each guarded recursive specification $E$ and each $X \in \vars(E)$,
we introduce a constant $\rec{X}{E}$ of sort $\Thr$ standing for the
$X$-component of the unique solution of $E$.
We write $\rec{t_X}{E}$ for $t_X$ with, for all $Y \in \vars(E)$, all
occurrences of $Y$ in $t_X$ replaced by $\rec{Y}{E}$.
The axioms for the constants for the components of the solutions of
guarded recursive specifications are RDP (Recursive Definition
Principle) and RSP (Recursive Specification Principle), which are given
in Table~\ref{axioms-BTA-recursion}.%
\begin{table}[!t]
\caption{RDP, RSP and AIP}
\label{axioms-BTA-recursion}
\begin{eqntbl}
\begin{saxcol}
\rec{X}{E} = \rec{t_X}{E}        & \mif X = t_X \in E  & \axiom{RDP} \\
E \Limpl X = \rec{X}{E}          & \mif X \in \vars(E) & \axiom{RSP} \\
{}                                                                   \\
\multicolumn{2}{@{}l@{}}
 {\LAND{n \geq 0} \proj{n}{x} = \proj{n}{y} \Limpl x = y}
                                                       & \axiom{AIP}
\end{saxcol}
\qquad
\begin{axcol}
\proj{0}{x} = \DeadEnd                                  & \axiom{P0} \\
\proj{n+1}{\Stop} = \Stop                               & \axiom{P1} \\
\proj{n+1}{\DeadEnd} = \DeadEnd                         & \axiom{P2} \\
\proj{n+1}{\pcc{x}{a}{y}} =
                      \pcc{\proj{n}{x}}{a}{\proj{n}{y}} & \axiom{P3}
\end{axcol}
\end{eqntbl}
\end{table}
RDP and RSP are actually axiom schemas in which $X$ stands for an
arbitrary variable, $t_X$ stands for an arbitrary \BTA\ term, and $E$
stands for an arbitrary guarded recursive specification over \BTA.
Side conditions are added to restrict what $X$, $t_X$ and $E$ stand for.
The equations $\rec{X}{E} = \rec{t_X}{E}$ for a fixed $E$ express that
the constants $\rec{X}{E}$ make up a solution of $E$.
The conditional equations $E \Limpl X = \rec{X}{E}$ express that this
solution is the only one.

RDP and RSP are means to prove closed terms that denote the same
infinite thread equal.
We introduce AIP (Approximation Induction Principle) as an additional
means to prove closed terms that denote the same infinite thread equal.
AIP is based on the view that two threads are identical if their
approximations up to any finite depth are identical.
The approximation up to depth $n$ of a thread is obtained by cutting it
off after it has performed $n$ actions.
AIP is also given in Table~\ref{axioms-BTA-recursion}.
Here, approximation up to depth $n$ is phrased in terms of the
unary \emph{projection} operator $\funct{\projop{n}}{\Thr}{\Thr}$.
The axioms for the projection operators are axioms P0--P3 in
Table~\ref{axioms-BTA-recursion}.
P1--P3 are actually axiom schemas in which $a$ stands for arbitrary
basic action and $n$ stands for an arbitrary natural number.

We write \BTArec\ for \BTA\ extended with the constants for the
components of the solutions of guarded recursive specifications, the
projection operators and the axioms RDP, RSP, AIP and P0--P3.

The minimal models of \BTArec\ are considered its standard models.%
\footnote
{A minimal model of an algebraic theory is a model of which no proper
 subalgebra is a model as well.
}
Recall that a model of an algebraic theory is minimal iff all elements
of the domains associated with the sorts of the theory can be denoted
by closed terms.
Henceforth, we restrict ourselves to the minimal models of \BTArec.
We assume that a minimal model $\Mbta$ of \BTArec\ has been given.

In the sequel, we use the term \emph{thread} for the elements of the
domain of $\Mbta$, and we denote the interpretations of constants and
operators in $\Mbta$ by the constants and operators themselves.

Let $T$ be a thread.
Then the set of \emph{states} or \emph{residual threads} of $T$,
written $\Res(T)$, is inductively defined as follows:
\begin{itemize}
\item
$T \in \Res(T)$;
\item
if $\pcc{T'}{a}{T''} \in \Res(T)$, then $T' \in \Res(T)$ and
$T'' \in \Res(T)$.
\end{itemize}

Let $T$ be a thread and let $\BAct' \subseteq \BActTau$.
Then $T$ is \emph{regular over} $\BAct'$ if the following conditions are
satisfied:
\begin{itemize}
\item
$\Res(T)$ is finite;
\item
for all $T',T'' \in \Res(T)$ and $a \in \BActTau$,
$\pcc{T'}{a}{T''} \in \Res(T)$ implies $a \in \BAct'$.
\end{itemize}
We say that $T$ is \emph{regular} if $T$ is regular over $\BActTau$.

For example, the $x$-component of the solution of the guarded recursive
specification consisting of the following two equations:
\begin{ldispl}
x = a \bapf y\;, \qquad
y = \pcc{(c \bapf y)}{b}{(\pcc{x}{d}{\Stop})}
\end{ldispl}%
has five states and is regular over any $\BAct' \subseteq \BActTau$ for
which $\set{a,b,c,d} \subseteq \BAct'$.

We will make use of the fact that being a regular thread coincides with
being a component of the solution of a finite guarded recursive
specification in which the right-hand sides of the recursion equations
are of a restricted form.

A \emph{linear recursive specification} over \BTA\ is a guarded
recursive specification $E = \set{X = t_X \where X \in V}$ over
\BTA, where each $t_X$ is a term of the form $\DeadEnd$, $\Stop$ or
$\pcc{Y}{a}{Z}$ with $Y,Z \in V$.
\begin{proposition}
\label{prop-lin-rec-thread}
Let $T$ be a thread and let $\BAct' \subseteq \BActTau$.
Then $T$ is regular over $\BAct'$ iff there exists a finite linear
recursive specification $E$ over \BTA\ in which only basic actions from
$\BAct'$ occur such that $T$ is a component of the solution of $E$.
\end{proposition}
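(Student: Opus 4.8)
The plan is to prove the two implications separately, after first recording the structural fact about $\Mbta$ on which both rest.

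First I would treat the direction that a finite linear recursive specification yields a regular thread. Suppose $T = \rec{X_1}{E}$ for a finite linear recursive specification $E = \set{X_i = t_{X_i} \where 1 \le i \le n}$ over \BTA\ in which only basic actions from $\BAct'$ occur, with $X_1,\dots,X_n$ enumerating $\vars(E)$. I would show that the finite set $S = \set{\rec{X_i}{E} \where 1 \le i \le n}$ contains $\Res(T)$. By RDP each $\rec{X_i}{E}$ equals $\rec{t_{X_i}}{E}$, which by linearity is $\DeadEnd$, $\Stop$, or $\pcc{\rec{X_j}{E}}{a}{\rec{X_k}{E}}$ with $a \in \BAct'$ and $\rec{X_j}{E},\rec{X_k}{E} \in S$. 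Hence $S$ is closed under forming residuals: whenever $\pcc{T'}{a}{T''} \in S$, both $T'$ and $T''$ lie in $S$. Since $T \in S$ and $\Res(T)$ is the least set containing $T$ closed under this operation, $\Res(T) \subseteq S$. Thus $\Res(T)$ is finite and every postconditional composition occurring in it carries an action from $\BAct'$, so $T$ is regular over $\BAct'$.

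Next I would treat the converse. Suppose $T$ is regular over $\BAct'$ and write $\Res(T) = \set{T_1,\dots,T_n}$ with $T_1 = T$. I would introduce a fresh variable $X_i$ for each $T_i$ and build the specification $E$ whose equation for $X_i$ is $X_i = \DeadEnd$ if $T_i = \DeadEnd$, $X_i = \Stop$ if $T_i = \Stop$, and $X_i = \pcc{X_j}{a}{X_k}$ if $T_i = \pcc{T_j}{a}{T_k}$, using that each residual $T_i$ has exactly one of these three forms, that the closure conditions defining $\Res(T)$ guarantee $T_j,T_k \in \Res(T)$, and that regularity forces $a \in \BAct'$. This $E$ is finite and linear, and only basic actions from $\BAct'$ occur in it. By construction the assignment $X_i \mapsto T_i$ satisfies every equation of $E$, so $T_1,\dots,T_n$ form a solution of $E$; since guarded recursive specifications have unique solutions in $\Mbta$ (equivalently, by RSP), $T_i = \rec{X_i}{E}$, and in particular $T = T_1 = \rec{X_1}{E}$ is a component of the solution of $E$.

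The main obstacle is that both directions lean on the structural fact that every thread in $\Mbta$ is exactly one of $\DeadEnd$, $\Stop$, or a postconditional composition $\pcc{T'}{a}{T''}$ with $a$, $T'$ and $T''$ uniquely determined, these three cases being mutually exclusive. This head-decomposition property is what lets me read equations off residuals in the second direction and, in the first, conclude that a residual presented as a postconditional composition must come from an equation of the form $\pcc{X_j}{a}{X_k}$ rather than from a $\DeadEnd$- or $\Stop$-equation. I would therefore record this property of the minimal model $\Mbta$, where it is immediate from the way threads are built, before carrying out either induction.
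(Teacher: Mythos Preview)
Your argument is correct and follows essentially the same route as the paper: enumerate $\Res(T)$ and read off a linear specification in one direction, and in the other direction show that the finite set of solution components is closed under taking residuals so that $\Res(T)$ sits inside it. The only notable difference is that you make explicit the head-decomposition trichotomy for threads in $\Mbta$ (each thread is exactly one of $\DeadEnd$, $\Stop$, or a postconditional composition with determined components), whereas the paper uses this tacitly; your added care here is appropriate but does not change the method.
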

\begin{proof}
The implication from left to right is proved as follows.
Because $T$ is regular, $\Res(T)$ is finite.
Hence, there are finitely many threads $T_1$, \ldots, $T_n$, with
$T = T_1$, such that $\Res(T) = \set{T_1,\ldots,T_n}$.
Now $T$ is the $x_1$-component of the solution of the linear recursive
specification consisting of the following equations:
\begin{ldispl}
\begin{gceqns}
x_i =
\left\{
\begin{array}[c]{@{}l@{\;\;}l@{}}
\Stop    & \mif T_i = \Stop \\
\DeadEnd & \mif T_i = \DeadEnd \\
\pcc{x_j}{a}{x_k} & \mif T_i = \pcc{T_j}{a}{T_k}
\end{array}
\right.
& \mathrm{for\;all}\; i \in [1,n]\;.
\end{gceqns}
\end{ldispl}%
Because $T$ is regular over $\BAct'$, only basic actions from $\BAct'$
occur in the linear recursive specification constructed in this way.

The implication from right to left is proved as follows.
Thread $T$ is a component of the unique solution of a finite linear
specification in which only basic actions from $\BAct'$ occur.
This means that there are finitely many threads $T_1$, \ldots, $T_n$,
with $T = T_1$, such that for every $i \in [1,n]$,
$T_i = \Stop$, $T_i = \DeadEnd$ or
$T_i = \pcc{T_j}{a}{T_k}$ for some $j,k \in [1,n]$ and
$a \in \BAct'$.
Consequently, $T' \in \Res(T)$ iff $T' = T_i$ for some $i \in [1,n]$
and moreover $\pcc{T'}{a}{T''} \in \Res(T)$ only if $a \in \BAct'$.
Hence, $\Res(T)$ is finite and $T$ is regular over $\BAct'$.
\qed
\end{proof}

\begin{remark}
A structural operational semantics of \BTArec\ and a bisimulation
equivalence based on it can be found in e.g.~\cite{BM05f}.
The quotient algebra of the algebra of closed terms of \BTArec\ by
this bisimulation equivalence is one of the minimal models of \BTArec.
\end{remark}

\section{Thread Extraction}
\label{sect-thread-extr}

In this short section, we use \BTArec\ to make mathematically precise
which threads are produced by instruction sequences under execution.

For that purpose, $\BAct$ is taken such that $\BAct \supseteq \BInstr$
is satisfied.

The \emph{thread extraction} operation $\textr{\ph}$ assigns a thread to
each instruction sequence.
The thread extraction operation is defined by the equations given in
Table~\ref{axioms-thread-extr} (for $a \in \BInstr$, $l \in \Nat$, and
$u \in \PInstr$)%
\begin{table}[!t]
\caption{Defining equations for thread extraction operation}
\label{axioms-thread-extr}
\begin{eqntbl}
\begin{eqncol}
\textr{a} = a \bapf \DeadEnd \\
\textr{a \conc F} = a \bapf \textr{F} \\
\textr{\ptst{a}} = a \bapf \DeadEnd \\
\textr{\ptst{a} \conc F} =
\pcc{\textr{F}}{a}{\textr{\fjmp{2} \conc F}} \\
\textr{\ntst{a}} = a \bapf \DeadEnd \\
\textr{\ntst{a} \conc F} =
\pcc{\textr{\fjmp{2} \conc F}}{a}{\textr{F}}
\end{eqncol}
\qquad
\begin{eqncol}
\textr{\fjmp{l}} = \DeadEnd \\
\textr{\fjmp{0} \conc F} = \DeadEnd \\
\textr{\fjmp{1} \conc F} = \textr{F} \\
\textr{\fjmp{l+2} \conc u} = \DeadEnd \\
\textr{\fjmp{l+2} \conc u \conc F} = \textr{\fjmp{l+1} \conc F} \\
\textr{\halt} = \Stop \\
\textr{\halt \conc F} = \Stop
\end{eqncol}
\end{eqntbl}
\end{table}
and the rule that $\textr{\fjmp{l} \conc F} = \DeadEnd$ if $\fjmp{l}$ is
the beginning of an infinite jump chain.
This rule is formalized in e.g.~\cite{BM07g}.

Let $F$ be an instruction sequence and $T$ be a thread.
Then we say that $F$ \emph{produces} $T$ if $\textr{F} = T$.
For example,
\begin{ldispl}
\begin{aeqns}
a \conc b \conc c
& \quad \mathrm{produces} \quad &
a \bapf b \bapf c \bapf \DeadEnd\;,
\\
\ptst{a} \conc \fjmp{2} \conc \fjmp{3} \conc b \conc \halt
& \quad \mathrm{produces} \quad &
\pcc{(b \bapf \Stop)}{a}{\DeadEnd}\;,
\\
\ptst{a} \conc \ntst{b} \conc c \conc \halt
& \quad \mathrm{produces} \quad &
\pcc{(\pcc{\Stop}{b}{(c \bapf \Stop)})}
    {a}{(c \bapf \Stop)}\;,
\\
\ptst{a} \conc \fjmp{2} \conc
(b \conc \fjmp{2} \conc c \conc \fjmp{2})\rep
& \quad \mathrm{produces} \quad &
\pcc{\DeadEnd}{a}{(b \bapf \DeadEnd)}\;.
\end{aeqns}
\end{ldispl}%
In the case of instruction sequences that are not finite, the produced
threads can be described as components of the solution of a guarded
recursive specification.
For example, the infinite instruction sequence
\begin{ldispl}
(a \conc \ptst{b})\rep
\end{ldispl}%
produces the $x$-component of the solution of the guarded recursive
specification consisting of following two equations:
\begin{ldispl}
x = a \bapf y\;, \qquad
y = \pcc{x}{b}{y}
\end{ldispl}%
and the infinite instruction sequence
\begin{ldispl}
a \conc
(\ptst{b} \conc \fjmp{2} \conc \fjmp{3} \conc c \conc
 \fjmp{4} \conc \ntst{d} \conc \halt \conc a)\rep
\end{ldispl}%
produces the $x$-component of the solution of the guarded recursive
specification consisting of following two equations:
\begin{ldispl}
x = a \bapf y\;, \qquad
y = \pcc{(c \bapf y)}{b}{(\pcc{x}{d}{\Stop})}\;.
\end{ldispl}%

\section{Algebra of Communicating Processes}
\label{sect-ACP}

In this section, we review \ACPt\ (Algebra of Communicating Processes
with abstraction).
This algebraic theory of processes will among other things be used to
make precise what processes are produced by the threads denoted by
closed terms of \BTArec.
For a comprehensive overview of \ACPt, the reader is referred
to~\cite{BW90,Fok00}.

In \ACPt, it is assumed that a fixed but arbitrary set $\Act$ of
\emph{atomic actions}, with $\tau,\dead \notin \Act$, and a fixed but
arbitrary commutative and associative function
$\funct{\commm}{\Act \union \set{\tau} \x \Act \union \set{\tau}}
               {\Act \union \set{\dead}}$,
with $\tau \commm e = \dead$ for all $e \in \Act \union \set{\tau}$,
have been given.
The function $\commm$ is regarded to give the result of synchronously
performing any two atomic actions for which this is possible, and to
give $\dead$ otherwise.
In \ACPt, $\tau$ is a special atomic action, called the silent step.
The act of performing the silent step is considered unobservable.
Because it would otherwise be observable, the silent step is considered
an atomic action that cannot be performed synchronously with other
atomic actions.
We write $\Actt$ for $\Act \union \set{\tau}$.

\ACPt\ has the following constants and operators:
\begin{itemize}
\item
for each $e \in \Act$, the \emph{atomic action} constant $e$\,;
\item
the \emph{silent step} constant $\tau$\,;
\item
the \emph{inaction} constant $\dead$\,;
\item
the binary \emph{alternative composition} operator ${}\altc{}$\,;
\item
the binary \emph{sequential composition} operator ${}\seqc{}$\,;
\item
the binary \emph{parallel composition} operator ${}\parc{}$\,;
\item
the binary \emph{left merge} operator ${}\leftm{}$\,;
\item
the binary \emph{communication merge} operator ${}\commm{}$\,;
\item
for each $H \subseteq \Act$, the unary \emph{encapsulation} operator
$\encap{H}$\,;
\item
for each $I \subseteq \Act$, the unary \emph{abstraction} operator
$\abstr{I}$\,.
\end{itemize}
We assume that there are infinitely many variables, including $x,y,z$.
Terms are built as usual.
We use infix notation for the binary operators.
The precedence conventions used with respect to the operators of \ACPt\
are as follows: $\altc$ binds weaker than all others, $\seqc$ binds
stronger than all others, and the remaining operators bind equally
strong.

Let $t$ and $t'$ be closed \ACPt\ terms, $e \in \Act$, and
$H,I \subseteq \Act$.
Intuitively, the constants and operators to build \ACPt\ terms can be
explained as follows:
\begin{itemize}
\item
the process denoted by $e$ first performs atomic action $e$ and next
terminates successfully;
\item
the process denoted by $\tau$ performs an unobservable atomic action and
next terminates successfully;
\item
the process denoted by $\dead$ can neither perform an atomic action nor
terminate successfully;
\item
the process denoted by $t \altc t'$ behaves either as the process denoted
by $t$ or as the process denoted by $t'$, but not both;
\item
the process denoted by $t \seqc t'$ first behaves as the process denoted
by $t$ and on successful termination of that process it next behaves as
the process denoted by $t'$;
\item
the process denoted by $t \parc t'$ behaves as the process that proceeds
with the processes denoted by $t$ and $t'$ in parallel;
\item
the process denoted by $t \leftm t'$ behaves the same as the process
denoted by $t \parc t'$, except that it starts with performing an atomic
action of the process denoted by $t$;
\item
the process denoted by $t \commm t'$ behaves the same as the process
denoted by $t \parc t'$, except that it starts with performing an atomic
action of the process denoted by $t$ and an atomic action of the process
denoted by $t'$ synchronously;
\item
the process denoted by $\encap{H}(t)$ behaves the same as the process
denoted by $t$, except that atomic actions from $H$ are blocked;
\item
the process denoted by $\abstr{I}(t)$ behaves the same as the process
denoted by $t$, except that atomic actions from $I$ are turned into
unobservable atomic actions.
\end{itemize}
The operators $\leftm$ and $\commm$ are of an auxiliary nature.
They are needed to axiomatize \ACPt.

The axioms of \ACPt\ are given in Table~\ref{axioms-ACPt}.
\begin{table}[!t]
\caption{Axioms of \ACPt}
\label{axioms-ACPt}
\begin{eqntbl}
\begin{axcol}
x \altc y = y \altc x                                  & \axiom{A1}  \\
(x \altc y) \altc z = x \altc (y \altc z)              & \axiom{A2}  \\
x \altc x = x                                          & \axiom{A3}  \\
(x \altc y) \seqc z = x \seqc z \altc y \seqc z        & \axiom{A4}  \\
(x \seqc y) \seqc z = x \seqc (y \seqc z)              & \axiom{A5}  \\
x \altc \dead = x                                      & \axiom{A6}  \\
\dead \seqc x = \dead                                  & \axiom{A7}  \\
{}                                                                   \\
x \parc y =
          x \leftm y \altc y \leftm x \altc x \commm y & \axiom{CM1} \\
a \leftm x = a \seqc x                                 & \axiom{CM2} \\
a \seqc x \leftm y = a \seqc (x \parc y)               & \axiom{CM3} \\
(x \altc y) \leftm z = x \leftm z \altc y \leftm z     & \axiom{CM4} \\
a \seqc x \commm b = (a \commm b) \seqc x              & \axiom{CM5} \\
a \commm b \seqc x = (a \commm b) \seqc x              & \axiom{CM6} \\
a \seqc x \commm b \seqc y =
                        (a \commm b) \seqc (x \parc y) & \axiom{CM7} \\
(x \altc y) \commm z = x \commm z \altc y \commm z     & \axiom{CM8} \\
x \commm (y \altc z) = x \commm y \altc x \commm z     & \axiom{CM9}
\end{axcol}
\qquad
\begin{axcol}
x \seqc \tau = x                                       & \axiom{B1}  \\
x \seqc (\tau \seqc (y \altc z) \altc y) = x \seqc (y \altc z)
                                                       & \axiom{B2}  \\
{}                                                                   \\
\encap{H}(a) = a                \hfill \mif a \notin H & \axiom{D1}  \\
\encap{H}(a) = \dead            \hfill \mif a \in H    & \axiom{D2}  \\
\encap{H}(x \altc y) = \encap{H}(x) \altc \encap{H}(y) & \axiom{D3}  \\
\encap{H}(x \seqc y) = \encap{H}(x) \seqc \encap{H}(y) & \axiom{D4}  \\
{}                                                                   \\
\abstr{I}(a) = a                \hfill \mif a \notin I & \axiom{TI1} \\
\abstr{I}(a) = \tau             \hfill \mif a \in I    & \axiom{TI2} \\
\abstr{I}(x \altc y) = \abstr{I}(x) \altc \abstr{I}(y) & \axiom{TI3} \\
\abstr{I}(x \seqc y) = \abstr{I}(x) \seqc \abstr{I}(y) & \axiom{TI4} \\
{}                                                                   \\
a \commm b = b \commm a                                & \axiom{C1}  \\
(a \commm b) \commm c = a \commm (b \commm c)          & \axiom{C2}  \\
\dead \commm a = \dead                                 & \axiom{C3}  \\
\tau \commm a = \dead                                  & \axiom{C4}
\end{axcol}
\end{eqntbl}
\end{table}
CM2--CM3, CM5--CM7, C1--C4, D1--D4 and TI1--TI4 are actually axiom
schemas in which $a$, $b$ and $c$ stand for arbitrary constants of
\ACPt, and $H$ and $I$ stand for arbitrary subsets of $\Act$.
\ACPt\ is extended with guarded recursion like \BTA.

A \emph{recursive specification} over \ACPt\ is a set of recursion
equations $E = \set{X = t_X \where X \in V}$, where $V$ is a set of
variables and each $t_X$ is an \ACPt\ term containing only variables
from $V$.
We write $\vars(E)$ for the set of all variables that occur in $E$.
Let $t$ be an \ACPt\ term without occurrences of abstraction operators
containing a variable $X$.
Then an occurrence of $X$ in $t$ is \emph{guarded} if $t$ has a subterm
of the form $e \seqc t'$ where $e \in \Act$ and $t'$ is a term
containing this occurrence of $X$.
Let $E$ be a recursive specification over \ACPt.
Then $E$ is a \emph{guarded recursive specification} if, in each
equation $X = t_X \in E$:
(i)~abstraction operators do not occur in $t_X$ and
(ii)~all occurrences of variables in $t_X$ are guarded or $t_X$ can be
rewritten to such a term using the axioms of \ACPt\ in either direction
and/or the equations in $E$ except the equation $X = t_X$ from left to
right.
We are only interested models of \ACPt\ in which guarded recursive
specifications have unique solutions, such as the models of \ACPt\
presented in~\cite{BW90}.

For each guarded recursive specification $E$ and each $X \in \vars(E)$,
we introduce a constant $\rec{X}{E}$ standing for the $X$-component of
the unique solution of $E$.
We write $\rec{t_X}{E}$ for $t_X$ with, for all $Y \in \vars(E)$, all
occurrences of $Y$ in $t_X$ replaced by $\rec{Y}{E}$.
The axioms for the constants for the components of the solutions of
guarded recursive specifications are RDP and RSP, which are given in
Table~\ref{axioms-ACPt-recursion}.%
\begin{table}[!t]
\caption{RDP, RSP and AIP}
\label{axioms-ACPt-recursion}
\begin{eqntbl}
\begin{saxcol}
\rec{X}{E} = \rec{t_X}{E}         & \mif X = t_X \in E  & \axiom{RDP} \\
E \Limpl X = \rec{X}{E}           & \mif X \in \vars(E) & \axiom{RSP} \\
{}                                                                    \\
\multicolumn{2}{@{}l@{}}
 {\LAND{n \geq 0} \proj{n}{x} = \proj{n}{y} \Limpl x = y}
                                                        & \axiom{AIP}
\end{saxcol}
\qquad
\begin{axcol}
\proj{0}{a} = \dead                                     & \axiom{PR1} \\
\proj{n+1}{a} = a                                       & \axiom{PR2} \\
\proj{0}{a \seqc x} = \dead                             & \axiom{PR3} \\
\proj{n+1}{a \seqc x} = a \seqc \proj{n}{x}             & \axiom{PR4} \\
\proj{n}{x \altc y} = \proj{n}{x} \altc \proj{n}{y}     & \axiom{PR5} \\
\proj{n}{\tau} = \tau                                   & \axiom{PR6} \\
\proj{n}{\tau \seqc x} = \tau \seqc \proj{n}{x}         & \axiom{PR7}
\end{axcol}
\end{eqntbl}
\end{table}
RDP and RSP are actually axiom schemas in which $X$ stands for an
arbitrary variable, $t_X$ stands for an arbitrary \ACPt\ term, and $E$
stands for an arbitrary guarded recursive specification over \ACPt.
Side conditions are added to restrict what $X$, $t_X$ and $E$ stand for.

Closed terms of \ACPt\ extended with constants for the components of
the solutions of guarded recursive specifications that denote the same
process cannot always be proved equal by means of the axioms of \ACPt\
together with RDP and RSP.
We introduce AIP to remedy this.
AIP is based on the view that two processes are identical if their
approximations up to any finite depth are identical.
The approximation up to depth $n$ of a process behaves the same as that
process, except that it cannot perform any further atomic action after
$n$ atomic actions have been performed.
AIP is given in Table~\ref{axioms-ACPt-recursion}.
Here, approximation up to depth $n$ is phrased in terms of a unary
\emph{projection} operator $\projop{n}$.
The axioms for the projection operators are axioms PR1--PR7 in
Table~\ref{axioms-ACPt-recursion}.
PR1--PR7 are actually axiom schemas in which $a$ stands for arbitrary
constants of \ACPt\ different from $\tau$ and $n$ stands for an
arbitrary natural number.

We write \ACPtrec\ for \ACPt\ extended with the constants for the
components of the solutions of guarded recursive specifications, the
projection operators, and the axioms RDP, RSP, AIP and PR1--PR7.

The minimal models of \ACPtrec\ are considered its standard models.
Henceforth, we restrict ourselves to the minimal models of \ACPtrec.
We assume that a fixed but arbitrary minimal model $\Macp$ of \ACPtrec\
has been given.

From Section~\ref{sect-expressiveness}, we will sometimes assume that
CFAR (Cluster Fair Abstraction Rule) is valid in $\Macp$.
CFAR says that a cluster of silent steps that has exits can be
eliminated if all exits are reachable from everywhere in the cluster.
A precise formulation of CFAR can be found in~\cite{Fok00}.

We use the term \emph{process} for the elements from the domain of
$\Macp$, and we denote the interpretations of constants and operators in
$\Macp$ by the constants and operators themselves.

Let $P$ be a process.
Then the set of \emph{states} or \emph{subprocesses} of $P$,
written $\Sub(P)$, is inductively defined as follows:
\begin{itemize}
\item
$P \in \Sub(P)$;
\item
if $e \seqc P' \in \Sub(P)$, then $P' \in \Sub(P)$;
\item
if $e \seqc P' \altc P'' \in \Sub(P)$, then $P' \in \Sub(P)$.
\end{itemize}

Let $P$ be a process and let $\Act' \subseteq \Actt$.
Then $P$ is \emph{regular over} $\Act'$ if the following conditions are
satisfied:
\begin{itemize}
\item
$\Sub(P)$ is finite;
\item
for all $P' \in \Sub(P)$ and $e \in \Actt$,
$e \seqc P' \in \Sub(P)$ implies $e \in \Act'$;
\item
for all $P',P'' \in \Sub(P)$ and $e \in \Actt$,
$e \seqc P' \altc P'' \in \Sub(P)$ implies $e \in \Act'$.
\end{itemize}
We say that $P$ is \emph{regular} if $P$ is regular over $\Actt$.

We will make use of the fact that being a regular process over $\Act$
coincides with being a component of the solution of a finite guarded
recursive specification in which the right-hand sides of the recursion
equations are linear terms.
\emph{Linearity} of terms is inductively defined as follows:
\begin{itemize}
\item
$\dead$ is linear;
\item
if $e \in \Actt$, then $e$ is linear;
\item
if $e \in \Actt$ and $X$ is a variable, then $e \seqc X$ is linear;
\item
if $t$ and $t'$ are linear, then $t \altc t'$  is linear.
\end{itemize}
A \emph{linear recursive specification} over \ACPt\ is a guarded
recursive specification $E = \set{X = t_X \where X \in V}$ over \ACPt,
where each $t_X$ is linear.
\begin{proposition}
\label{prop-lin-rec-process}
Let $P$ be a process and let $\Act' \subseteq \Act$.
Then $P$ is regular over $\Act'$ iff there exists a finite linear
recursive specification $E$ over \ACPt\ in which only atomic actions
from $\Act'$ occur such that $P$ is a component of the solution of $E$.
\end{proposition}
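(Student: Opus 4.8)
The plan is to follow the pattern of the proof of Proposition~\ref{prop-lin-rec-thread}, with the three-case analysis on the shape of a thread replaced by a head-normal-form (expansion) argument for processes, and with RDP and RSP used to pass between the semantic states in $\Macp$ and the syntactic linear specification.

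For the implication from left to right, I would assume that $P$ is regular over $\Act'$. Then $\Sub(P)$ is finite, so write $\Sub(P) = \set{P_1,\ldots,P_n}$ with $P = P_1$. The crucial step is to bring each $P_i$ into the form of a linear term over its states, that is, to show that $P_i$ equals a finite alternative composition of summands each of which is $\dead$, a bare atomic action $e$, or $e \seqc P_j$, where the empty sum is read as $\dead$. Granting this, the clauses defining $\Sub$ ensure that every continuation $P_j$ occurring in a summand $e \seqc P_j$ is again in $\Sub(P)$, hence is one of $P_1,\ldots,P_n$, while the second and third regularity conditions ensure that the action $e$ heading such a summand lies in $\Act'$. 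Replacing every occurrence of $P_j$ by a fresh variable $x_j$ turns the head normal form of $P_i$ into a linear term $t_{x_i}$ over $\set{x_1,\ldots,x_n}$ in which only actions from $\Act'$ occur. The specification $E = \set{x_i = t_{x_i} \where i \in [1,n]}$ is then linear, and in particular guarded, since in a linear term every variable occurrence sits inside a summand $e \seqc x_j$ and is thereby guarded. By construction $P_1,\ldots,P_n$ is a solution of $E$, so by RSP each $P_i = \rec{x_i}{E}$, and $P = P_1 = \rec{x_1}{E}$ is a component of the solution of $E$.

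For the implication from right to left, I would assume $P$ is the $x_1$-component of the solution of a finite linear recursive specification $E = \set{x_i = t_{x_i} \where i \in [1,n]}$ in which only actions from $\Act'$ occur, and write $P_i = \rec{x_i}{E}$, so that $P = P_1$ and, by RDP, each $P_i = \rec{t_{x_i}}{E}$ is a linear term in $P_1,\ldots,P_n$. I would prove by induction on the structure of a linear term $t$ that the subprocesses of $\rec{t}{E}$ are $\rec{t}{E}$ itself together with the subprocesses of the continuations $\rec{x_j}{E}$ occurring in the sequential summands $e \seqc x_j$ of $t$; this is immediate from the inductive clauses defining $\Sub$, using commutativity and associativity of $\altc$ to expose each sequential summand as a top-level alternative. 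It follows that $\Sub(P_1) \subseteq \set{P_1,\ldots,P_n}$, so $\Sub(P_1)$ is finite, and that whenever $e \seqc P' \in \Sub(P_1)$ or $e \seqc P' \altc P'' \in \Sub(P_1)$ the action $e$ is one occurring in $E$ and hence in $\Act'$. Thus all three conditions for $P$ to be regular over $\Act'$ are met.

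I expect the main obstacle to be the head-normal-form step in the left-to-right direction. Whereas every thread is literally one of $\Stop$, $\DeadEnd$ or $\pcc{T'}{a}{T''}$, an element of $\Macp$ carries no such built-in shape, so rewriting each state $P_i$ into a finite linear sum requires an expansion/normal-form result for the minimal model of \ACPtrec, together with the finiteness of $\Sub(P)$ to bound the number of summands. A secondary, bookkeeping-level point is to make the syntactic definition of $\Sub$ mesh cleanly with the commutativity and associativity of $\altc$, and in particular to account for the terminating (bare-action) summands, whose heading actions must also be shown to lie in $\Act'$ when reading off the linear specification.
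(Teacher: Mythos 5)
Your proposal is correct and takes essentially the same approach as the paper: the paper's entire proof of this proposition is the remark that it follows the same line as the proof of Proposition~\ref{prop-lin-rec-thread}, and your argument is precisely that line worked out in detail for processes (enumerate the finitely many states, read off a linear term for each, invoke RSP; conversely use RDP and the inductive definition of $\Sub$). The head-normal-form obstacle you flag in the left-to-right direction is real but is passed over silently by the paper; your way of discharging it --- an expansion result for elements of the minimal model of \ACPtrec, with finiteness of $\Sub(P)$ bounding the number of summands --- is the standard one.
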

\begin{proof}
The proof follows the same line as the proof of
Proposition~\ref{prop-lin-rec-thread}.
\qed
\end{proof}
\begin{remark}
Proposition~\ref{prop-lin-rec-process} is concerned with processes that
are regular over $\Act$.
We can also prove that being a regular process over $\Actt$ coincides
with being a component of the solution of a finite linear recursive
specification over \ACPt if we assume that the cluster fair abstraction
rule~\cite{Fok00} holds in the model $\Macp$.
However, we do not need this more general result.
\end{remark}

We will write $\vAltc{i \in S} t_i$, where $S = \set{i_1,\ldots,i_n}$
and $t_{i_1},\ldots,t_{i_n}$ are \ACPt\ terms,
for $t_{i_1} \altc \ldots \altc t_{i_n}$.
The convention is that $\vAltc{i \in S} t_i$ stands for $\dead$ if
$S = \emptyset$.
We will often write $X$ for $\rec{X}{E}$ if $E$ is clear from the
context.
It should be borne in mind that, in such cases, we use $X$ as a
constant.

\section{Program-Service Interaction Instructions}
\label{sect-psii}

Recall that, in \PGA, it is assumed that a fixed but arbitrary set
$\BInstr$ of basic instructions has been given.
In the sequel, we will make use a version of \PGA\ in which the
following additional assumptions relating to $\BInstr$ are made:
\begin{itemize}
\item
a fixed but arbitrary finite set $\Foci$ of \emph{foci} has been given;
\item
a fixed but arbitrary finite set $\Meth$ of \emph{methods} has been
given;
\item
$\BInstr = \set{f.m \where f \in \Foci, m \in \Meth}$.
\end{itemize}

Each focus plays the role of a name of some service provided by an
execution environment that can be requested to process a command.
Each method plays the role of a command proper.
Executing a basic instruction of the form $f.m$ is taken as making a
request to the service named $f$ to process command $m$.

A basic instruction of the form $f.m$ is called a
\emph{program-service interaction instruction}.
Recall that, in \BTA, it is assumed that a fixed but arbitrary set
$\BAct$ of basic actions has been given.
In the sequel, we will make use of a version of \BTA\ in which
$\BAct = \BInstr$.
A basic action of the form $f.m$ is called a
\emph{thread-service interaction action}.

The intuition concerning program-service interaction instructions given
above will be made fully precise in Section~\ref{sect-process-extr},
using \ACP.

\section{Process Extraction}
\label{sect-process-extr}

In this section, we use \ACPtrec\ to make mathematically precise which
processes are produced by threads.

For that purpose, $\Act$ and $\commm$ are taken such that the following
conditions are satisfied:%
\footnote
{As usual, we will write $\Bool$ for the set $\set{\True,\False}$.}
\begin{ldispl}
\begin{aeqns}
\Act & \supseteq &
\set{\snd_f(d) \where f \in \Foci, d \in \Meth \union \Bool} \union
\set{\rcv_f(d) \where f \in \Foci, d \in \Meth \union \Bool}
\union
\set{\stp,\iact}
\end{aeqns}
\end{ldispl}%
and for all $f \in \Foci$, $d \in \Meth \union \Bool$, and
$e \in \Act$:
\begin{ldispl}
\begin{aeqns}
\snd_f(d) \commm \rcv_f(d) = \iact\;,
\\
\snd_f(d) \commm e = \dead & & \mif e \neq \rcv_f(d)\;,
\\
e \commm \rcv_f(d) = \dead & & \mif e \neq \snd_f(d)\;,
\end{aeqns}
\qquad\qquad
\begin{aeqns}
{} \phantom{\snd_f(d) \commm \rcv_f(d) = \iact\;,} \\
\stp \commm e = \dead & & \mif e \neq \stp\;,
\\
\iact \commm e = \dead\;.
\end{aeqns}
\end{ldispl}%
Actions of the forms $\snd_f(d)$ and $\rcv_f(d)$ are send and receive
actions, respectively, $\stp$ is an explicit termination action, and
$\iact$ is a concrete internal action.

The \emph{process extraction} operation $\pextr{\ph}$ assigns a process
to each thread.
The process extraction operation $\pextr{\ph}$ is defined by
$\pextr{T} = \abstr{\set{\stp}}(\cpextr{T})$, where $\cpextr{\ph}$ is
defined by the equations given in Table~\ref{eqns-process-extr}
(for $f \in \Foci$ and $m \in \Meth$).%
\begin{table}[!t]
\caption{Defining equations for process extraction operation}
\label{eqns-process-extr}
\begin{eqntbl}
\begin{eqncol}
\cpextr{\Stop} = \stp
\\
\cpextr{\DeadEnd} = \dead
\\
\cpextr{\pcc{T}{\Tau}{T'}} = \iact \seqc \iact \seqc \cpextr{T}
\\
\cpextr{\pcc{T}{f.m}{T'}} =
\snd_f(m) \seqc
(\rcv_f(\True) \seqc \cpextr{T} \altc
 \rcv_f(\False) \seqc \cpextr{T'})
\end{eqncol}
\end{eqntbl}
\end{table}

Let $P$ be a process, $T$ be a thread, and $F$ be an instruction
sequence.
Then we say that $T$ \emph{produces} $P$ if
$\tau \seqc \abstr{I}(\textr{T}) = \tau \seqc P$
for some $I \subseteq \Act$,
and we say that $F$ \emph{produces} $P$ if $\textr{F}$ produces $P$.

Notice that two atomic actions are involved in performing a basic action
of the form $f.m$: one for sending a request to process command $m$ to
the service named $f$ and another for receiving a reply from that
service upon completion of the processing.
Notice also that, for each thread $T$, $\cpextr{T}$ is a process that in
the event of termination performs a special termination action just
before termination.
Abstraction from this termination action yields the process denoted by
$\pextr{T}$.

The process extraction operation preserves the axioms of \BTArec.
Before we make this fully precise, we have a closer look at the axioms
of \BTArec.

A proper axiom is an equation or a conditional equation.
In Table~\ref{axioms-BTA-recursion}, we do not find proper axioms.
Instead of proper axioms, we find axiom schemas without side conditions
and axiom schemas with side conditions.
The axioms of \BTArec\ are obtained by replacing each axiom schema by
all its instances.

Henceforth, we write $\alpha^*$, where $\alpha$ is a valuation of
variables in $\Mbta$, for the unique homomorphic extension of $\alpha$
to terms of \BTArec.
Moreover, we identify $t_1 = t_2$ and $\emptyset \Limpl t_1 = t_2$.
\begin{proposition}
\label{prop-preservation-axioms}
Let $E \Limpl t_1 = t_2$ be an axiom of \BTArec,
and let $\alpha$ be a valuation of variables in $\Mbta$.
Then $\pextr{\alpha^*(t_1)} = \pextr{\alpha^*(t_2)}$ if
$\pextr{\alpha^*(t'_1)} = \pextr{\alpha^*(t'_2)}$ for all
$t'_1 = t'_2 \,\in\, E$.
\end{proposition}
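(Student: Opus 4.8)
The plan is to argue by cases on which axiom of \BTArec\ the (conditional) equation $E \Limpl t_1 = t_2$ instantiates. For the unconditional axioms --- T1, RDP and P0--P3 --- the premise set $E$ is empty, so I only have to establish $\pextr{\alpha^*(t_1)} = \pextr{\alpha^*(t_2)}$ outright. Since $\Mbta$ is a \emph{model} of \BTArec, each of these axioms already holds in $\Mbta$, so $\alpha^*(t_1)$ and $\alpha^*(t_2)$ denote one and the same thread; as $\pextr{\ph}$ is a well-defined operation on threads, their images coincide. (For T1 one may alternatively observe that $\cpextr{\pcc{T}{\Tau}{T'}} = \iact \seqc \iact \seqc \cpextr{T}$ does not depend on $T'$ at all.) This leaves the genuinely conditional axioms RSP and AIP, where the hypothesis concerns equality of extracted \emph{processes} rather than of threads, so the bare model argument no longer suffices.

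For RSP, write $E = \set{Y = t_Y \where Y \in V}$ with $X \in V$. The idea is to turn $E$ into a guarded recursive specification over \ACPt\ and exhibit two solutions of it. Because each $t_Y$ is $\DeadEnd$, $\Stop$ or of the form $\pcc{t}{a}{t'}$, iterating the defining equations of $\cpextr{\ph}$ shows that $\cpextr{\alpha^*(t_Y)}$ equals a fixed \ACPt\ context $\Phi_Y$ --- built from $\stp$, $\dead$, the actions $\snd_f(m)$, $\rcv_f(\True)$, $\rcv_f(\False)$, $\iact$ and the operators $\seqc$, $\altc$, determined by the syntactic shape of $t_Y$ alone --- with $\cpextr{\alpha^*(Y')}$ substituted at the variable positions. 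Distributing $\abstr{\set{\stp}}$ over $\seqc$ and $\altc$, the same holds for $\pextr{\ph}$: $\pextr{\alpha^*(t_Y)} = \hat\Phi_Y$ with $\pextr{\alpha^*(Y')}$ at the variable positions, and $\hat\Phi_Y$ is guarded since every variable in $t_Y$ sits behind a leading $\snd_f(m)$ or $\iact$. By the hypothesis $\pextr{\alpha^*(Y)} = \pextr{\alpha^*(t_Y)}$, the family $Y \mapsto \pextr{\alpha^*(Y)}$ solves $\set{Z_Y = \hat\Phi_Y \where Y \in V}$; by RDP together with the same compositionality, so does $Y \mapsto \pextr{\rec{Y}{E}}$. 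RSP for \ACPt, valid in $\Macp$, then forces the two solutions to coincide, and taking $Y = X$ yields $\pextr{\alpha^*(X)} = \pextr{\rec{X}{E}}$.

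For AIP the conclusion is $\pextr{T_1} = \pextr{T_2}$ with $T_1 = \alpha^*(x)$, $T_2 = \alpha^*(y)$, and the hypothesis reads $\pextr{\proj{n}{T_1}} = \pextr{\proj{n}{T_2}}$ for all $n$. I would prove it via AIP for \ACPt\ in $\Macp$, i.e.\ by showing $\proj{m}{\pextr{T_1}} = \proj{m}{\pextr{T_2}}$ for every $m$. The bridge is the correspondence lemma $\proj{2n}{\cpextr{T}} = \cpextr{\proj{n}{T}}$, proved by induction on $n$ with $T$ ranging over all threads (casing on whether $T$ is $\Stop$, $\DeadEnd$, $\pcc{T'}{f.m}{T''}$ or $\pcc{T'}{\Tau}{T''}$, and using P0, P3 and the projection axioms PR1--PR5), reflecting that each thread step accounts for exactly two process steps. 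The main obstacle is that $\pextr{\ph}$ carries the abstraction $\abstr{\set{\stp}}$ and projection does not commute with abstraction in general; what rescues the argument is that $\stp$ arises solely from $\cpextr{\Stop} = \stp$ and is therefore always immediately followed by termination, so after abstraction the resulting $\tau$'s are terminal, no unbounded $\tau$-sequences occur, and $\proj{m}{\pextr{T}}$ depends on $\cpextr{T}$ only up to a bounded depth (namely $2m+2$). Combining this bound with the correspondence lemma gives $\proj{m}{\pextr{T}} = \proj{m}{\pextr{\proj{m+1}{T}}}$; instantiating at $T_1$ and $T_2$, applying the hypothesis at $n = m+1$, and invoking AIP completes the case. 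I expect this abstraction/projection bookkeeping for AIP to be the one delicate point, the remaining cases being routine.
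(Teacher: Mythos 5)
Your proposal is correct and follows essentially the same route as the paper: the paper likewise dismisses the axiom of \BTA\ and RDP/RSP as trivial and rests the AIP and P0--P3 cases on exactly your correspondence lemma $\cpextr{\proj{n}{T}} = \proj{2n}{\cpextr{T}}$, proved by induction on $n$ with case distinction on the structure of $T$. The only (harmless) deviations are that you treat P0--P3 via the observation that they already hold in $\Mbta$ rather than via the lemma, and that you spell out the RSP lifting and the abstraction-versus-projection bookkeeping that the paper leaves implicit.
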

\begin{proof}
The proof is trivial for the axiom of \BTA\ and the axioms RDP and RSP.
Using the equation $\cpextr{\proj{n}{T}} = \proj{2n}{\cpextr{T}}$, the
proof is also trivial for the axioms AIP and P0--P3.
This equation is easily proved by induction on $n$ and case distinction
on the structure of $T$ in both the basis step and the inductive step.
\qed
\end{proof}
\begin{remark}
Proposition~\ref{prop-preservation-axioms} would go through if no
abstraction of the above-men\-tioned special termination action was
made.
Notice further that \ACPt\ without the silent step constant and the
abstraction operator, better known as \ACP, would suffice if no
abstraction of the special termination action was made.
\end{remark}

\section{A Simple Protocol for Remote Instruction Processing}
\label{sect-simple-protocol}

In this section and the next section, we consider two protocols for
remote instruction processing.
The simple protocol described in this section is presumably the most
straightforward protocol for remote instruction processing that can be
achieved.
Therefore, we consider it a suitable starting-point for the design of
more advanced protocols for remote instruction processing -- such as
the one described in the next section.
Before this simple protocol is described, an extension of \ACP\ is
introduced to simplify the description of the protocols.

The following extension of \ACP\ from~\cite{BB92c} will be used: the
non-branching conditional operator ${}\gc{}$ over $\Bool$.
The expression $b \gc p$, is to be read as
\texttt{if} $b$ \texttt{then} $p$ \texttt{else} $\dead$.
The additional axioms for the non-branching conditional operator are
\begin{ldispl}
\True \gc x = x \quad \mathrm{and} \quad \False \gc x = \dead\;.
\end{ldispl}%
In the sequel, we will use expressions whose evaluation yields Boolean
values instead of the constants $\True$ and $\False$.
Because the evaluation of the expressions concerned are not dependent
on the processes denoted by the terms in which they occur, we will
identify each such expression with the constant for the Boolean value
that its evaluation yields.
Further justification of this can be found in~\cite[Section~9]{BM05a}.

The protocols concern systems whose main components are an
\emph{instruction stream generator} and an \emph{instruction stream
execution unit}.
The instruction stream generator generates different instruction
streams for different threads.
This is accomplished by starting it in different states.
The general idea of the protocols is that:
\begin{itemize}
\item
the instruction stream generator generating an instruction stream for a
thread $\pcc{T}{a}{T'}$ sends $a$ to the instruction stream execution
unit;
\item
on receipt of $a$, the instruction stream execution unit gets the
execution of $a$ done and sends the reply produced to the instruction
stream generator;
\item
on receipt of the reply, the instruction stream generator proceeds with
generating an instruction stream for $T$ if the reply is $\True$ and for
$T'$ otherwise.
\end{itemize}
In the case where the thread is $\Stop$ or $\DeadEnd$, the instruction
stream generator sends a special instruction ($\stopd$ or $\deadd$) and
the instruction stream execution unit does not send back a reply.

In this section, we consider a very simple protocol for remote
instruction processing that makes no effort to keep the execution unit
busy without intermission.

In the protocols, the generation of an instruction stream start from
the thread produced by an instruction sequence under execution instead
of the instruction sequence itself.
It follows immediately from the definition of the thread extraction
operation that the threads produced by instruction sequences under
execution are regular threads.
Therefore, we restrict ourselves to regular threads.

We write $\BActi$ for the set $\BAct \union \set{\stopd,\deadd}$.
Elements from $\BActi$ will loosely be called instructions.
The restriction of the domain of $\Mbta$ to the regular threads will
be denoted by $\Reg$.

The functions $\nm{act}$, $\nm{thrt}$, and $\nm{thrf}$ defined below
give, for each thread $T$ different from $\Stop$ and $\DeadEnd$, the
basic action that $T$ will perform first, the thread with which it will
proceed if the reply from the execution environment is $\True$, and the
thread with which it will proceed if the reply from the execution
environment is $\False$, respectively.
The functions $\funct{\nm{act}}{\Reg}{\BActi}$,
$\funct{\nm{thrt}}{\Reg}{\Reg}$, and $\funct{\nm{thrf}}{\Reg}{\Reg}$ are
defined as follows:
\begin{ldispl}
\begin{geqns}
\act{\Stop} = \stopd\;,
\\
\act{\DeadEnd} = \deadd\;,
\\
\act{\pcc{T}{a}{T'}} = a\;,
\end{geqns}
\quad
\begin{geqns}
\rest{\Stop} = \DeadEnd\;,
\\
\rest{\DeadEnd} = \DeadEnd\;,
\\
\rest{\pcc{T}{a}{T'}} = T\;,
\end{geqns}
\quad
\begin{geqns}
\resf{\Stop} = \DeadEnd\;,
\\
\resf{\DeadEnd} = \DeadEnd\;,
\\
\resf{\pcc{T}{a}{T'}} = T'\;.
\end{geqns}
\end{ldispl}%

The function $\enablea$ defined below is used by the instruction stream
generator to distinguish when it starts with handling the
instruction to be executed next between the different instructions that
it may be.
The function
$\funct{\enablea}
  {\BActi \x \Reg}{\Bool}$
is defined as follows:
\begin{ldispl}
\enablea(a,T) =
\left\{
\begin{array}{@{}l@{\quad}l@{}}
\True  & \mif \act{T} = a
\\
\False & \mif \act{T} \neq a\;.
\end{array}
\right.
\end{ldispl}%

For the purpose of describing the simple protocol outlined above in
\ACPt, $\Act$ and $\commm$ are taken such that, in addition to the
conditions mentioned at the beginning of
Section~\ref{sect-process-extr}, the following conditions are satisfied:
\begin{ldispl}
\begin{aeqns}
\Act & \supseteq &
\set{\snd_i(d) \where i \in \set{1,2}, d \in \BActi}    \union
\set{\rcv_i(d) \where i \in \set{1,2}, d \in \BActi}
\\ & {} \union {} &
\set{\snd_i(r) \where i \in \set{3,4}, r \in \Bool} \union
\set{\rcv_i(r) \where i \in \set{3,4}, r \in \Bool} \union
\set{\jact}
\end{aeqns}
\end{ldispl}%
and for all $i \in \set{1,2}$, $j \in \set{3,4}$, $d \in \BActi$,
$r \in \Bool$, and $e \in \Act$:
\begin{ldispl}
\begin{aeqns}
\snd_i(d) \commm \rcv_i(d) = \jact \;,
\\
\snd_i(d) \commm e = \dead & & \mif e \neq \rcv_i(d)\;,
\\
e \commm \rcv_i(d) = \dead & & \mif e \neq \snd_i(d)\;,
\eqnsep
\jact \commm e = \dead\;.
\end{aeqns}
\qquad\;
\begin{aeqns}
\snd_j(r) \commm \rcv_j(r) = \jact \;,
\\
\snd_j(r) \commm e = \dead & & \mif e \neq \rcv_j(r)\;,
\\
e \commm \rcv_j(r) = \dead & & \mif e \neq \snd_j(r)\;,
\end{aeqns}
\end{ldispl}%
Notice that the set $\Bool$ is the set of replies.

Let $T \in \Reg$.
Then the process representing the simple protocol for remote
instruction processing with regard to thread $T$ is described by
\begin{ldispl}
\encap{H}(\ISGA{T} \parc \IMTCA \parc \RTCA \parc \ISEUA)\;,
\end{ldispl}%
where the process $\ISGA{T}$ is recursively specified by the following
equation:
\begin{ldispl}
% !!
\renewcommand{\arraystretch}{1.4}
\begin{aeqns}
\ISGA{T} & = &
\smash{\Altc{f.m \in \BAct}}
 \enablea(f.m,T) \gc {}
\\ & & \phantom{\smash{\Altc{f.m \in \BAct}}} \quad
 \snd_1(f.m) \seqc
 (\rcv_4(\True)  \seqc \ISGA{\rest{T}} \altc
  \rcv_4(\False) \seqc \ISGA{\resf{T}})
\\ & \altc &
 \enablea(\stopd,T) \gc \snd_1(\stopd) \altc
 \enablea(\deadd,T) \gc \snd_1(\deadd)\;,
\end{aeqns}
\end{ldispl}%
the process $\IMTCA$ is recursively specified by the following equation:
\begin{ldispl}
\begin{aeqns}
\IMTCA & = &
\Altc{d \in \BActi}
 \rcv_1(d) \seqc \snd_2(d) \seqc \IMTCA\;,
\end{aeqns}
\end{ldispl}%
the process $\RTCA$ is recursively specified by the following equation:
\begin{ldispl}
\begin{aeqns}
\RTCA & = &
\Altc{r \in \Bool}
 \rcv_3(r) \seqc \snd_4(r) \seqc \RTCA\;,
\end{aeqns}
\end{ldispl}%
the process $\ISEUA$ is recursively specified by the following
equation:
\begin{ldispl}
\begin{aeqns}
\ISEUA & = &
\Altc{f.m \in \BAct}
 \rcv_2(f.m) \seqc \snd_f(m) \seqc
 (\rcv_f(\True)  \seqc \snd_3(\True) \altc
  \rcv_f(\False) \seqc \snd_3(\False)) \seqc
 \ISEUA
\\ & \altc &
\rcv_2(\stopd) \altc \rcv_2(\deadd) \seqc \dead
\end{aeqns}
\end{ldispl}%
and
\begin{ldispl}
\begin{aeqns}
H & = &
\set{\snd_i(d) \where i \in \set{1,2}, d \in \BActi}    \union
\set{\rcv_i(d) \where i \in \set{1,2}, d \in \BActi}
\\ & {} \union {} &
\set{\snd_i(r) \where i \in \set{3,4}, r \in \Bool} \union
\set{\rcv_i(r) \where i \in \set{3,4}, r \in \Bool}\;.
\end{aeqns}
\end{ldispl}%
$\ISGA{T}$ is the instruction stream generator for thread $T$, $\IMTCA$
is the transmission channel for messages containing instructions,
$\RTCA$ is the transmission channel for replies, and $\ISEUA$ is the
instruction stream execution unit.

If we abstract from all communications via the transmission channels,
then the process denoted by
$\encap{H}(\ISGA{T} \parc \IMTCA \parc \RTCA \parc \ISEUA)$
and the process $\pextr{T}$ are equal modulo an initial silent step.
\begin{theorem}
\label{theorem-simple-protocol}
For each $T \in \Reg$,
$\tau \seqc
 \abstr{\set{\jact}}
  (\encap{H}(\ISGA{T} \parc \IMTCA \parc \RTCA \parc \ISEUA))$
denotes the process $\tau \seqc \pextr{T}$.
\end{theorem}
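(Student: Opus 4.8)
The plan is to prove the theorem by structural reasoning on the regular thread $T$, exploiting the fact that $\Reg$ consists of regular threads and hence (by Proposition~\ref{prop-lin-rec-thread}) that each such $T$ is a component of a finite linear recursive specification. Since the left-hand side involves the constant $\ISGA{T}$, which is itself defined by recursion over $\Res(T)$, I would work relative to the finite set $\set{T_1,\ldots,T_n} = \Res(T)$ and set up a guarded recursive specification whose solution is the whole encapsulated parallel composition, indexed by these residual threads. The goal is then to show that this specification, after abstraction, satisfies the same recursion equations (modulo the leading silent step) as $\pextr{T} = \abstr{\set{\stp}}(\cpextr{T})$, so that RSP closes the argument.

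The core computation is a \emph{protocol-unwinding} lemma. For a fixed $T \in \Reg$ with $\act{T} = f.m$ I would expand
\begin{ldispl}
\begin{aeqns}
\encap{H}(\ISGA{T} \parc \IMTCA \parc \RTCA \parc \ISEUA)
\end{aeqns}
\end{ldispl}
using the axioms CM1--CM9, D1--D4 and C1--C4 together with RDP for each of the four component processes. Because $H$ blocks every $\snd_i$ and $\rcv_i$ for $i \in \set{1,2,3,4}$ that is not matched by a communication, the only surviving transitions are the synchronizations $\snd_i(d) \commm \rcv_i(d) = \jact$ and $\snd_j(r) \commm \rcv_j(r) = \jact$; every attempt at an unmatched action is turned into $\dead$ and absorbed by A6/A7. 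The conditional guards $\enablea(f.m,T)$ select exactly the single summand matching $\act{T}$, so the alternative composition in $\ISGA{T}$ collapses to one branch. I would track the message as it passes $\ISGA{T} \to \IMTCA \to \ISEUA$, then the genuine service interaction $\snd_f(m)$ / $\rcv_f(\True)$ (which survives because $f \notin \set{1,2,3,4}$ and is \emph{not} in $H$), then the reply travelling back $\ISEUA \to \RTCA \to \ISGA{\rest{T}}$. The net effect is a fixed block of $\jact$ steps followed by the real send $\snd_f(m)$, the branching receive, and recursion into the encapsulated process for $\rest{T}$ or $\resf{T}$. Symbolically the encapsulated process for $T$ equals, modulo a leading $\jact$, a term of the shape $\snd_f(m) \seqc (\rcv_f(\True) \seqc (\text{rec. for } \rest{T}) \altc \rcv_f(\False) \seqc (\text{rec. for } \resf{T}))$ with some fixed number of interspersed $\jact$ steps.

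Having established this unwinding for the three cases $T = \Stop$ (where $\ISGA{T}$ sends $\stopd$, $\ISEUA$ absorbs it and the whole process reduces to a few $\jact$ steps ending in successful termination, matching $\cpextr{\Stop} = \stp$), $T = \DeadEnd$ (where $\deadd$ leads to $\dead$, matching $\cpextr{\DeadEnd} = \dead$), and $T = \pcc{T'}{f.m}{T''}$, I would apply $\abstr{\set{\jact}}$ using TI1--TI4. All the internal communication actions $\jact$ become $\tau$, and the laws B1--B2 let me contract the resulting runs of silent steps. The point is that after abstraction the recursion equations for $\abstr{\set{\jact}}(\encap{H}(\cdots))$ coincide, up to an initial $\tau$ and the silent-step laws, with those defining $\cpextr{T}$; prefixing $\tau$ on both sides and invoking RSP (and AIP if a direct fixed-point match is awkward) yields $\tau \seqc \abstr{\set{\jact}}(\encap{H}(\cdots)) = \tau \seqc \cpextr{T}$, whence the claim after noting that $\pextr{T} = \abstr{\set{\stp}}(\cpextr{T})$ and $\stp \notin \set{\jact}$.

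\textbf{The main obstacle} will be the bookkeeping in the unwinding lemma: correctly counting and aligning the silent steps so that B1--B2 can absorb the protocol overhead without disturbing the observable $\snd_f(m)$ and $\rcv_f(r)$ actions, and handling the channel processes $\IMTCA$ and $\RTCA$, which persist across iterations and so must be carried along in the recursion variables rather than unfolded once. A secondary subtlety is justifying that the guarded specification built from $\Res(T)$ is genuinely guarded (so RSP applies), which follows because each cycle through the protocol contains at least one concrete action before recursion recurs.
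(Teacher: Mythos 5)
Your overall plan coincides with the paper's own proof: the paper likewise reduces the claim to a finite linear recursive specification (taking one for the process $\pextr{T}$ via Proposition~\ref{prop-lin-rec-process} rather than for the thread $T$ via Proposition~\ref{prop-lin-rec-thread}), performs exactly the protocol expansion you describe --- case distinction on the structure of $T$, the axioms of \ACPt, RDP, encapsulation killing every unmatched send and receive --- and then closes the argument with AIP, by induction on the projection depth $n$, where you propose RSP on a specification indexed by $\Res(T)$ (with AIP as fallback). That difference is essentially one of packaging; your account of the unwinding is in fact more explicit than the paper's one-paragraph proof.

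There is, however, one concrete step in your proposal that does not go through under the literal definitions: your claim that for $T = \Stop$ ``the whole process reduces to a few $\jact$ steps ending in successful termination''. After the two handshakes that carry $\stopd$ from $\ISGA{\Stop}$ through $\IMTCA$ to $\ISEUA$, the generator and the execution unit have terminated, but the perpetual channel processes $\IMTCA$ and $\RTCA$ remain in the parallel composition; every initial action they offer lies in $H$, and no communication between $\rcv_1(d)$ and $\rcv_3(r)$ is defined, so $\encap{H}(\IMTCA \parc \RTCA) = \dead$ and the whole term reduces to $\jact \seqc \jact \seqc \dead$ rather than to a successfully terminating process. After abstraction and the $\tau$-laws this is $\tau \seqc \dead$, which is not equal to $\tau \seqc \pextr{\Stop} = \tau$. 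You correctly observe elsewhere that the channels ``persist across iterations and must be carried along in the recursion variables''; the $\Stop$ case is precisely where that persistence bites. To be fair, the paper's own proof is silent on this point as well (its induction on $n$ faces the same mismatch in the $\Stop$ case), so you should either show explicitly how the residual channels are disposed of, or flag that the protocol (or the statement) needs a termination convention for the channel components. The $\DeadEnd$ case and the postconditional case are unaffected, since there $\dead$ is the intended outcome or the recursion never sheds the channels.
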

\begin{proof}
Let $T \in \Reg$.
Moreover, let $E$ be a finite linear recursive specification over
\ACPt\ with $X \in \vars(E)$ such that $\pextr{T}$ is the $X$-component
of the solution of $E$ in $\Macp$.
By Proposition~\ref{prop-lin-rec-process} and the definition of
the process extraction operation, it is sufficient to prove that
\begin{ldispl}
\tau \seqc
\abstr{\set{\jact}}
 (\encap{H}(\ISGA{T} \parc \IMTCA \parc \RTCA \parc \ISEUA)) =
\tau \seqc \rec{X}{E}\;.
\end{ldispl}%
By AIP, it is sufficient to prove that for all $n \geq 0$:
\begin{ldispl}
\proj{n}{\tau \seqc
         \abstr{\set{\jact}}
          (\encap{H}
            (\ISGA{T} \parc \IMTCA \parc \RTCA \parc \ISEUA))} =
\proj{n}{\tau \seqc \rec{X}{E}}\;.
\end{ldispl}%
This is easily proved by induction on $n$ and in the inductive step by
case distinction on the structure of $T$, using the axioms of \ACPt\ and
RDP and in addition the fact that $\pextr{T'} \in \Sub(\pextr{T})$ for
all $T' \in \Res(T)$ and the fact that there exists an bijection between
$\Sub(\pextr{T})$ and $\vars(E)$.
\qed
\end{proof}

\section{A More Complex Protocol}
\label{sect-complex-protocol}

In this section, we consider a more complex protocol for remote
instruction processing that makes an effort to keep the execution unit
busy without intermission.

The specifics of the more complex protocol considered here are that:
\begin{itemize}
\item
the instruction stream generator may run ahead of the instruction stream
execution unit by not waiting for the receipt of the replies resulting
from the execution of instructions that it has sent earlier;
\item
to ensure that the instruction stream execution unit can handle the
run-ahead, each instruction sent by the instruction stream generator is
accompanied with the sequence of replies after which the instruction
must be executed;
\item
to correct for replies that have not yet reached the instruction stream
generator, each instruction sent is also accompanied with the number of
replies received since the last sending of an instruction.
\end{itemize}
This protocol is reminiscent of an instruction pre-fetching mechanism
as found in pipelined processors (see e.g.~\cite{HP03a}), but its range
of application is not restricted to pipelined instruction processing.

We write $\Bool^{\leq n}$, where $n \in \Nat$, for the set
$\set{u \in \seqof{\Bool} \where \len(u) \leq n}$.%
\footnote
{As usual, we write $\seqof{D}$ for the set of all finite sequences
 with elements from set $D$ and $\len(\sigma)$ for the length of finite
 sequence $\sigma$.
 Moreover, we write $\emptyseq$ for the empty sequence, $d$ for the
 sequence having $d$ as sole element, $\sigma \sigma'$ for the
 concatenation of finite sequences $\sigma$ and $\sigma'$, and
 $\tail(\sigma)$ for the tail of finite sequence $\sigma$.
}

It is assumed that a natural number $\maxlen$ has been given.
The number $\maxlen$ is taken for the maximal number of steps that the
instruction stream generator may run ahead of the instruction stream
execution unit.
Whether the execution unit can be kept busy without intermission with
the given $\maxlen$ depends on the actual execution times of
instructions and the actual transmission times over the transmission
channels involved.
If the execution unit can be kept busy without intermission with the
given $\maxlen$, then it is useless to increase~$\maxlen$.

The set $\Msg$ of \emph{instruction messages} is defined as follows:
\begin{ldispl}
\Msg = [0,\maxlen] \x \Bool^{\leq \maxlen} \x \BActi\;.
\end{ldispl}%
In an instruction message $\tup{n,u,a} \in \Msg$:
\begin{itemize}
\item
$n$ is the number of replies that are acknowledged by the message;
\item
$u$ is the sequence of replies after which the instruction that is part
of the message must be executed;
\item
$a$ is the instruction that is part of the message.
\end{itemize}
The instruction stream generator sends instruction messages via an
instruction message transmission channel to the instruction stream
execution unit.
We refer to a succession of transmitted instruction messages as an
\emph{instruction stream}.
An instruction stream is dynamic by nature, in contradistinction with an
instruction sequence.

The set $\Stisg$ of \emph{instruction stream generator states} is
defined as follows:
\begin{ldispl}
\Stisg = [0,\maxlen] \x \setof{(\Bool^{\leq \maxlen+1} \x \Reg)}\;.
\end{ldispl}%
In an instruction stream generator state $\tup{n,R} \in \Stisg$:
\begin{itemize}
\item
$n$ is the number of replies that has been received by the instruction
stream generator since the last acknowledgement of received replies;
\item
in each $\tup{u,T} \in R$, $u$ is the sequence of replies after which
the thread $T$ must be performed.
\end{itemize}
The functions $\updpm$ and $\updcr$ defined below are used to model the
updates of the instruction stream generator state on producing a message
and consuming a reply, respectively.
The function
$\funct{\updpm}{(\Bool^{\leq \maxlen} \x \Reg) \x \Stisg}{\Stisg}$
is defined as follows:
\begin{ldispl}
\updpm(\tup{u,T},\tup{n,R}) =
\\ \quad \left\{
\begin{array}{@{}l@{\;}l@{}}
\tup{0,(R \diff \set{\tup{u,T}}) \union
       \set{\tup{u\True,\rest{T}},\tup{u\False,\resf{T}}}} &
\mif \act{T} \in \BAct
\\
\tup{0,(R \diff \set{\tup{u,T}})} &
\mif \act{T} \notin \BAct\;.
\end{array}
\right.
\end{ldispl}%
The function $\funct{\updcr}{\Bool \x \Stisg}{\Stisg}$ is defined as
follows:
\begin{ldispl}
\updcr(r,\tup{n,R}) =
\tup{n + 1,\set{\tup{u,T} \where \tup{ru,T} \in R}}\;.
\end{ldispl}%
The function $\select$ defined below is used to model the selection of
the sequence of replies and the instruction that will be part of the
next message produced by the instruction stream generator.
The function
$\funct{\select}{\setof{(\Bool^{\leq \maxlen} \x \Reg)}}
                {\setof{(\Bool^{\leq \maxlen} \x \Reg)}}$
is defined as follows:
\begin{ldispl}
\select(R) =
\set{\tup{u,T} \in R \where
     \Forall{\tup{v,T'} \in R}{\len(u) \leq \len(v)}}\;.
\end{ldispl}%
Notice that $\tup{u,T} \in \select(R)$ and $\tup{v,T'} \in R$ only if
$\len(u) \leq \len(v)$.
By that breadth-first run-ahead is enforced.
The performance of the protocol would change considerably if
breadth-first run-ahead was not enforced.

The set $\Stiseu$  of \emph{instruction stream execution unit states} is
defined as follows:
\begin{ldispl}
\Stiseu = [0,\maxlen] \x \setof{(\Bool^{\leq \maxlen} \x \BActi)}\;.
\end{ldispl}%
In an instruction stream execution unit state $\tup{n,S} \in \Stiseu$:
\begin{itemize}
\item
$n$ is the number of replies for which the instruction stream
execution unit still has to receive an acknowledgement;
\item
in each $\tup{u,a} \in S$, $u$ is the sequence of replies after which
the instruction $a$ must be executed.
\end{itemize}
The functions $\updcm$ and $\updpr$ defined below are used to model the
updates of the instruction stream execution unit state on consuming a
message and producing a reply, respectively.
The function $\funct{\updcm}{\Msg \x \Stiseu}{\Stiseu}$ is defined as
follows:
\begin{ldispl}
\updcm(\tup{k,u,a},\tup{n,S}) =
\tup{n \monus k,S \union \set{\tup{\tail^{n \monus k}(u),a}}}\;.%
\footnotemark
\end{ldispl}%
\footnotetext
{As usual, we write $i \monus j$ for the monus of $i$ and $j$, i.e.\
 $i \monus j = i - j$ if $i \geq j$ and $i \monus j = 0$ otherwise.
 As usual, $\tail^n(u)$ is defined by induction on $n$ as follows:
 $\tail^0(u) = u$ and $\tail^{n+1}(u) = \tail(\tail^{n}(u))$.
}%
The function $\funct{\updpr}{\Bool \x \Stiseu}{\Stiseu}$ is defined as
follows:
\begin{ldispl}
\updpr(r,\tup{n,S}) =
\tup{n + 1,\set{\tup{u,a} \where \tup{ru,a} \in S}}\;.
\end{ldispl}%
The function $\enable$ defined below is used by the instruction stream
execution unit to distinguish when it starts with handling the
instruction to be executed next between the different instructions that
it may be.
The function
$\funct{\enable}
  {\BActi \x \setof{(\Bool^{\leq \maxlen} \x \BActi)}}{\Bool}$
is defined as follows:
\begin{ldispl}
\enable(a,S) =
\left\{
\begin{array}{@{}l@{\quad}l@{}}
\True  & \mif \tup{\emptyseq,a} \in S
\\
\False & \mif \tup{\emptyseq,a} \notin S\;.
\end{array}
\right.
\end{ldispl}%

The instruction stream execution unit sends replies via a reply
transmission channel to the instruction stream generator.
We refer to a succession of transmitted replies as a
\emph{reply stream}.

For the purpose of describing the transmission protocol in \ACPt, $\Act$
and $\commm$ are taken such that, in addition to the conditions
mentioned at the beginning of Section~\ref{sect-process-extr}, the
following conditions are satisfied:
\begin{ldispl}
\begin{aeqns}
\Act & \supseteq &
\set{\snd_i(d) \where i \in \set{1,2}, d \in \Msg} \union
\set{\rcv_i(d) \where i \in \set{1,2}, d \in \Msg}
\\ & {} \union {} &
\set{\snd_i(r) \where i \in \set{3,4}, r \in \Bool} \union
\set{\rcv_i(r) \where i \in \set{3,4}, r \in \Bool} \union
\set{\jact}
\end{aeqns}
\end{ldispl}%
and for all $i \in \set{1,2}$, $j \in \set{3,4}$, $d \in \Msg$,
$r \in \Bool$, and $e \in \Act$:
\begin{ldispl}
\begin{aeqns}
\snd_i(d) \commm \rcv_i(d) = \jact \;,
\\
\snd_i(d) \commm e = \dead & & \mif e \neq \rcv_i(d)\;,
\\
e \commm \rcv_i(d) = \dead & & \mif e \neq \snd_i(d)\;,
\eqnsep
\jact \commm e = \dead\;.
\end{aeqns}
\qquad\;
\begin{aeqns}
\snd_j(r) \commm \rcv_j(r) = \jact \;,
\\
\snd_j(r) \commm e = \dead & & \mif e \neq \rcv_j(r)\;,
\\
e \commm \rcv_j(r) = \dead & & \mif e \neq \snd_j(r)\;,
\end{aeqns}
\end{ldispl}%

Let $T \in \Reg$.
Then the process representing the more complex protocol for remote
instruction processing with regard to thread $T$ is described by
\begin{ldispl}
\encap{H}(\ISG{T} \parc \IMTC \parc \RTC \parc \ISEU)\;,
\end{ldispl}%
where the process $\ISG{T}$ is recursively specified by the following
equations:
\begin{ldispl}
\begin{aeqns}
\ISG{T} & = & \ISGi{\tup{0,\set{\tup{\emptyseq,T}}}}\;,
\eqnsep
\ISGi{\tup{n,R}} & = &
\Altc{\tup{u,T} \in \select(R)}
 \snd_1(\tup{n,u,\act{T}}) \seqc \ISGi{\updpm(\tup{u,T},\tup{n,R})}
\\ & \altc &
\xAltc{r \in \Bool}{\tup{u,T} \in \select(R)}
 \rcv_4(r)  \seqc \ISGi{\updcr(r,\tup{n,R})}
\\
\multicolumn{3}{@{}l@{}}
{(\mathrm{for\; every\;} \tup{n,R} \in \Stisg
  \mathrm{\;with\;} R \neq \emptyset)\;,}
\eqnsep
\ISGi{\tup{n,\emptyset}} & = & \jact
\\
\multicolumn{3}{@{}l@{}}
{(\mathrm{for\; every\;} \tup{n,\emptyset} \in \Stisg)\;,}
\end{aeqns}
\end{ldispl}%
the process $\IMTC$ is recursively specified by the following equation:
\begin{ldispl}
\begin{aeqns}
\IMTC & = &
\Altc{d \in \Msg}
 \rcv_1(d) \seqc \snd_2(d) \seqc \IMTC\;,
\end{aeqns}
\end{ldispl}%
the process $\RTC$ is recursively specified by the following equation:
\begin{ldispl}
\begin{aeqns}
\RTC & = &
\xAltc{r \in \Bool}{d \in \Msg}
 \rcv_3(r) \seqc \snd_4(r) \seqc \RTC\;,
\end{aeqns}
\end{ldispl}%
the process $\ISEU$ is recursively specified by the following
equations:
\begin{ldispl}
\begin{aeqns}
\ISEU & = & \ISEUi{\tup{0,\emptyset}}\;,
\eqnsep
\ISEUi{\tup{n,S}} & = &
\xAltc{d \in \Msg}{f.m \in \BAct}
 \rcv_2(d) \seqc \ISEUi{\updcm(d,\tup{n,S})}
\\ & \altc &
\Altc{f.m \in \BAct}
 \enable(f.m,S) \gc \snd_f(m) \seqc \ISEUii{\tup{f,\tup{n,S}}}
\\ & \altc &
\enable(\stopd,S) \gc \jact \altc
\enable(\deadd,S) \gc \dead
\\
\multicolumn{3}{@{}l@{}}
{(\mathrm{for\; every\;} \tup{n,S} \in \Stiseu)\;,}
\eqnsep
\ISEUii{\tup{f,\tup{n,S}}} & = &
\xAltc{r \in \Bool}{d \in \Msg}
 \rcv_f(r) \seqc \snd_3(r) \seqc \ISEUi{\updpr(r,\tup{n,S})}
\\ & \altc &
\Altc{d \in \Msg}
 \rcv_2(d) \seqc \ISEUii{\tup{f,\updcm(d,\tup{n,S})}}
\\
\multicolumn{3}{@{}l@{}}
{(\mathrm{for\; every\;} \tup{f,\tup{n,S}} \in \Foci \x \Stiseu)\;,}
\end{aeqns}
\end{ldispl}%
and
\begin{ldispl}
\begin{aeqns}
H & = &
\set{\snd_i(d) \where i \in \set{1,2}, d \in \Msg} \union
\set{\rcv_i(d) \where i \in \set{1,2}, d \in \Msg}
\\ & {} \union {} &
\set{\snd_i(r) \where i \in \set{3,4}, r \in \Bool} \union
\set{\rcv_i(r) \where i \in \set{3,4}, r \in \Bool}\;.
\end{aeqns}
\end{ldispl}%
$\ISG{T}$ is the instruction stream generator for thread $T$, $\IMTC$ is
the transmission channel for instruction messages, $\RTC$ is the
transmission channel for replies, and $\ISEU$ is the instruction stream
execution unit.

The protocol described above has been designed such that,
for each $T \in \Reg$,
$\tau \seqc
 \abstr{\set{\jact}}
  (\encap{H}(\ISG{T} \parc \IMTC \parc \RTC \parc \ISEU))$
denotes the process $\tau \seqc \pextr{T}$.
We refrain from presenting a proof of the claim that the protocol
satisfies this because this paper is first and foremost a conceptual
paper and the proof is straightforward but tedious.

The transmission channels $\IMTC$ and $\RTC$ can keep one instruction
message and one reply, respectively.
The protocol has been designed in such a way that the protocol will also
work properly if these channels are replaced by channels with larger
capacity and even by channels with unbounded capacity.

Suppose that the transmission times over the transmission channels are
small compared with the execution times of instructions.
Even then the protocol described in Section~\ref{sect-simple-protocol}
will always have to idle for a short time after the execution of an
instruction, whereas after an initial phase the protocol described above
will never have to idle after the execution of an instruction if the
instruction stream generator may run a few steps ahead of the
instruction stream execution unit.

\section{Adaptations of the Protocol}
\label{sect-adaptations}

In this section, we discuss some conceivable adaptations of the protocol
described in Section~\ref{sect-complex-protocol}.
While we were thinking through the details of that protocol, various
variations suggested themselves.
The variations discussed below are among the most salient ones.
We think they deserve mention.
However, their discussion is not in depth.
The reason for this is that these variations have not yet been
investigated thoroughly.

Consider the case where, for each instruction, it is known what the
probability is with which its execution leads to the reply $\True$.
This might give reason to adapt the protocol described in
Section~\ref{sect-complex-protocol}.
Suppose that the instruction stream generator states do not only keep
the sequences of replies after which threads must be performed, but also
the sequences of instructions involved in producing those sequences of
replies.
Then the probability with which the sequences of replies will happen can
be calculated and several conceivable adaptations of the protocol to
this probabilistic knowledge are possible by mere changes in the
selection of the sequence of replies and the instruction that will be
part of the next instruction message produced by the instruction stream
generator.
Among those adaptations are:
\begin{itemize}
\item
restricting the instruction messages that are produced ahead to the ones
where the sequence of replies after which the instruction must be
executed will happen with a probability $\geq 0.50$, but sticking to
breadth-first run-ahead;
\item
restricting the instruction messages that are produced ahead to the ones
where the sequence of replies after which the instruction must be
executed will happen with a probability $\geq 0.95$, but not sticking to
breadth-first run-ahead.
\end{itemize}
At first sight, these adaptations are reminiscent of combinations of an
instruction pre-fetching mechanism and a branch prediction mechanism as
found in pipelined processors (see e.g.~\cite{HP03a}).
However, usually branch prediction mechanisms make use of statistics
based on recently processed instructions instead of probabilistic
knowledge of the kind used in the protocols sketched above.

Regular threads can be represented in such a way that it is effectively
decidable whether the two threads with which a thread may proceed after
performing its first action are identical.
Consider the case where threads are represented in the instruction
stream generator states in such a way.
Then the protocol can be adapted such that no duplication of instruction
messages takes place in the cases where the two threads with which a
thread possibly proceeds after performing its first action are
identical.
This can be accomplished by using sequences of elements from
$\Bool \union \set{*}$, instead of sequences of elements from $\Bool$,
in instruction messages, instruction stream generator states, and
instruction stream execution unit states.
The occurrence of $*$ at position $i$ in a sequence indicates that the
$i$th reply may be either $\True$ or $\False$.
The impact of this change on the updates of instruction stream generator
states and instruction stream execution unit states is minor.
This adaptation is reminiscent of an instruction pre-fetch mechanism as
found in pipelined processors that prevents instruction pre-fetches
that are superfluous due to identity of branches.

\section{Alternative Choice Instructions}
\label{sect-pci}

Process algebra is an area of the study of concurrency which is
considered relevant to computer science, as is witnesses by the extent
of the work on algebraic theories of processes such as \ACP, CCS and
CSP in theoretical computer science.
This strongly hints that there are programmed systems whose behaviours
can be taken for processes as considered in process algebra.
Therefore, it is interesting to know to which extent the behaviours
considered in process algebra can be produced by programs under
execution, starting from the perception of a program as an instruction
sequence.
In coming sections, we will establish results concerning the processes
as considered in \ACP\ that can be produced by instruction sequences
under execution.

For the purpose of producing processes as considered in \ACP, we need a
version of \PGA\ with special basic instructions to deal with the
non-deterministic choice between alternatives that stems from the
alternative composition of processes.
Recall that, in \PGA, it is assumed that a fixed but arbitrary set
$\BInstr$ of basic instructions has been given.
In the coming sections, we will make use a version of \PGA\ in which the
following additional assumptions relating to $\BInstr$ are made:
\begin{itemize}
\item
a fixed but arbitrary finite set $\Foci$ of \emph{foci} has been given;
\item
a fixed but arbitrary finite set $\Meth$ of \emph{methods} has been
given;
\item
a fixed but arbitrary set $\AAct$ of \emph{atomic actions}, with
$\tact \notin \AAct$, has been given;
\item
$\BInstr =
 \set{f.m \where f \in \Foci, m \in \Meth} \union
 \set{\ac(e_1,e_2) \where e_1,e_2 \in \AAct \union \set{\tact}}$.
\end{itemize}

On execution of a basic instruction $\ac(e_1,e_2)$, first a
non-deterministic choice between the atomic actions $e_1$ and $e_2$ is
made and then the chosen atomic action is performed.
The reply $\True$ is produced if $e_1$ is performed and the reply
$\False$ is produced if $e_2$ is performed.
Basic instructions of this kind are material to produce all regular
processes by means of instruction sequences.
A basic instruction of the form $\ac(e_1,e_2)$ is called an
\emph{alternative choice instruction}.
Henceforth, we will write \PGAac\ for the version of \PGA\ with
alternative choice instructions.

The intuition concerning alternative choice instructions given above
will be made fully precise at the end of this section, using \ACPt.
It will not be made fully precise using an extension of \BTA\ because it
is considered a basic property of threads that they are deterministic
behaviours.

Recall that we  make use of a version of \BTA\ in which
$\BAct = \BInstr$.
A basic action of the form $\ac(e_1,e_2)$ is called an
\emph{alternative choice action}.
Henceforth, we will write \BTAac\ for the version of \BTA\ with
alternative choice actions.

For the purpose of making precise what processes are produced by the
threads denoted by closed terms of \BTAacrec, $\Act$ and $\commm$ are
taken such that, in addition to the conditions mentioned at the
beginning of Section~\ref{sect-process-extr}, the following conditions
are satisfied:
\begin{ldispl}
\begin{aeqns}
\Act & \supseteq & \AAct \union \set{\tact}
\end{aeqns}
\end{ldispl}%
and for all $e,e' \in \Act$:
\begin{ldispl}
\begin{aeqns}
e' \commm e = \dead & & \mif e' \in \AAct \union \set{\tact}\;.
\end{aeqns}
\end{ldispl}%

The process extraction operation for \BTAac\ has as defining equations
the equations given in Table~\ref{eqns-process-extr} and
in addition the equation given in Table~\ref{eqns-process-extr-add-pc}.
\begin{table}[!t]
\caption{Additional defining equation for process extraction operation}
\label{eqns-process-extr-add-pc}
\begin{eqntbl}
\begin{eqncol}
\cpextr{\pcc{T}{\ac(e,e')}{T'}} =
e \seqc \cpextr{T} \altc e' \seqc \cpextr{T'}
\end{eqncol}
\end{eqntbl}
\end{table}

Proposition~\ref{prop-preservation-axioms} goes through for \BTAac.

\section{Instruction Sequence Producible Processes}
\label{sect-expressiveness}

It follows immediately from the definitions of the thread extraction and
process extraction operations that the instruction sequences considered
in \PGA\ produce regular processes.
The question is whether all regular processes are producible by these
instruction sequences.
In this section, we show that all regular processes can be produced by
the instruction sequences with alternative choice instructions.

We will make use of the fact that all regular threads over $\BAct$ can
be produced by the single-pass instruction sequences considered in \PGA.
\begin{proposition}
\label{prop-expressiveness}
For each thread $T$ that is regular over $\BAct$, there exists a \PGA\
instruction sequence $F$ such that $F$ produces $T$, i.e.\
$\textr{F} = T$.
\end{proposition}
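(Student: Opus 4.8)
The plan is to invoke the normal-form result for regular threads and then compile it into a single periodic \PGA\ term whose blocks mimic the recursion equations, using the repetition operator to realise the backward jumps that plain \PGA\ does not provide.

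First I would appeal to Proposition~\ref{prop-lin-rec-thread}: since $T$ is regular over $\BAct$, there is a finite linear recursive specification $E = \set{X_i = t_i \where i \in [1,n]}$ over \BTA\ whose $X_1$-component is $T$ and in which each right-hand side $t_i$ is $\Stop$, $\DeadEnd$, or $\pcc{X_{j_i}}{a_i}{X_{k_i}}$ with $a_i \in \BAct$ and $j_i,k_i \in [1,n]$. Because $\BAct = \BInstr$, every such $a_i$ is a basic instruction and may legitimately occur in the test instruction $\ptst{a_i}$.

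Next I would build $F$ out of one three-instruction block $u_i$ per equation, taking $u_i = \halt \conc \halt \conc \halt$ when $t_i = \Stop$, $u_i = \fjmp{0} \conc \fjmp{0} \conc \fjmp{0}$ when $t_i = \DeadEnd$, and $u_i = \ptst{a_i} \conc \fjmp{p_i} \conc \fjmp{q_i}$ when $t_i = \pcc{X_{j_i}}{a_i}{X_{k_i}}$, and then setting $F = (u_1 \conc \cdots \conc u_n)\rep$. Reading the blocks inside one period as occupying positions $3i-2,3i-1,3i$, I choose $p_i$ to be the element of $[1,3n]$ with $p_i \equiv 3(j_i-i)-1 \pmod{3n}$ and $q_i$ to be the element of $[1,3n]$ with $q_i \equiv 3(k_i-i)-2 \pmod{3n}$, so that in the infinite periodic sequence the $\True$-jump lands exactly on the head of block $j_i$ and the $\False$-jump lands on the head of block $k_i$. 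A small congruence check shows that neither residue is $0$ modulo $3n$, so both jump lengths are positive and strictly smaller than one period; hence no unintended inaction arises and each jump reaches its target after crossing at most one period boundary.

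The verification would then proceed by RSP. For $i \in [1,n]$ let $G_i = (u_i \conc \cdots \conc u_n \conc u_1 \conc \cdots \conc u_{i-1})\rep$ be the rotation of the body of $F$ that starts at block $i$, so that $G_1 = F$ by construction and $G_{j_i}$ (resp.\ $G_{k_i}$) is precisely the instruction sequence on which the corresponding jump lands. Writing $P_i = \textr{G_i}$ and computing with the defining equations of $\textr{\ph}$, I would show that $P_i = \Stop$, $P_i = \DeadEnd$, or $P_i = \pcc{P_{j_i}}{a_i}{P_{k_i}}$ according to the three cases; for a test block the crucial step is $\textr{G_i} = \pcc{\textr{\fjmp{p_i} \conc \cdots}}{a_i}{\textr{\fjmp{2} \conc \fjmp{p_i} \conc \fjmp{q_i} \conc \cdots}}$, after which collapsing the two finite jump chains yields $\textr{G_{j_i}} = P_{j_i}$ and $\textr{G_{k_i}} = P_{k_i}$. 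Thus $(P_i)_{i \in [1,n]}$ solves $E$, so RSP gives $P_i = \rec{X_i}{E}$ for every $i$, and in particular $\textr{F} = P_1 = \rec{X_1}{E} = T$. The one genuinely delicate point, which I expect to be the main obstacle, is the bookkeeping that a forward jump of the prescribed length, traced through the unfolding $(u_1 \conc \cdots \conc u_n)\rep = (u_1 \conc \cdots \conc u_n) \conc (u_1 \conc \cdots \conc u_n)\rep$, lands exactly on a block head and reproduces the rotated sequence $G_{j_i}$: one must check that the chain of $\textr{\fjmp{l+2} \conc u \conc F} = \textr{\fjmp{l+1} \conc F}$ reductions terminates at that head rather than entering an infinite jump chain, and that wrap-around into the second period is absorbed by the \PGA\ repetition axioms. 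Everything else is a routine case distinction.
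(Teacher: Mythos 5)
Your proposal is correct and follows essentially the same route as the paper: reduce to a finite linear recursive specification via Proposition~\ref{prop-lin-rec-thread}, compile each equation into a three-instruction block ($\halt\conc\halt\conc\halt$, $\fjmp{0}\conc\fjmp{0}\conc\fjmp{0}$, or $\ptst{a}\conc\fjmp{\cdot}\conc\fjmp{\cdot}$), and take the repetition of the concatenated blocks, with your single congruence $p_i \equiv 3(j_i-i)-1$, $q_i \equiv 3(k_i-i)-2$ modulo the period length being exactly the paper's case split between $i<j$ and $i\geq j$ (forward within the period versus wrap-around into the next copy). The RSP verification via the rotated sequences $G_i$, which the paper leaves implicit, is the right way to discharge the remaining bookkeeping.
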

\begin{proof}
By Proposition~\ref{prop-lin-rec-thread}, $T$ is a component of the
solution of some finite linear recursive specification $E$ over \BTA.
There occur finitely many variables $X_0,\ldots,X_n$ in $E$.
Assume that $T$ is the $X_0$-component of the solution of $E$.
Let $F$ be the \PGA\ instruction sequence
$(F_0 \conc \ldots \conc F_n)\rep$, where $F_i$ is defined as follows
($0 \leq i \leq n$):

\begin{scriptsize}
\begin{ldispl}
\mbox{} \hsp{-2.5}
F_i =
\left\{
\begin{array}[c]{@{}l@{\;}l@{}}
\halt \conc \halt \conc \halt
 & \mif X_i = \Stop \in E \\
\fjmp{0} \conc \fjmp{0} \conc \fjmp{0}
 & \mif X_i = \DeadEnd \in E \\
\ptst{a} \conc \fjmp{3{\cdot}(j{-}i){-}1} \conc
               \fjmp{3{\cdot}(k{-}i){-}2}
 & \mif X_i = \pcc{X_j}{a}{X_k} \in E \Land i < j \Land i < k \\
\ptst{a} \conc \fjmp{3{\cdot}(j{-}i){-}1} \conc
               \fjmp{3{\cdot}(n{+}1{-}(i{-}k)){-}2}
 & \mif X_i = \pcc{X_j}{a}{X_k} \in E \Land i < j \Land i \geq k \\
\ptst{a} \conc \fjmp{3{\cdot}(n{+}1{-}(i{-}j)){-}1} \conc
               \fjmp{3{\cdot}(k{-}i){-}2}
 & \mif X_i = \pcc{X_j}{a}{X_k} \in E \Land i \geq j \Land i < k \\
\ptst{a} \conc \fjmp{3{\cdot}(n{+}1{-}(i{-}j)){-}1} \conc
               \fjmp{3{\cdot}(n{+}1{-}(i{-}k)){-}2}
 & \mif X_i = \pcc{X_j}{a}{X_k} \in E \Land i \geq j \Land i \geq k.
\end{array}
\right.
\end{ldispl}%
\end{scriptsize}%
\sloppy
Then $F$ is a \PGA\ instruction sequence such that the interpretation
of $\textr{F} = T$.
\qed
\end{proof}

All regular processes over $\AAct$ can be produced by the instruction
sequences considered in \PGAac.
\begin{theorem}
\label{theorem-completeness}
Assume that CFAR is valid in $\Macp$.
Then, for each process $P$ that is regular over $\AAct$, there exists an
instruction sequence $F$ in which only basic instructions of the form
$\ac(e,\tact)$ occur such that $F$ produces $P$, i.e.\
$\tau \seqc \abstr{\set{\tact}}(\pextr{\textr{F}}) = \tau \seqc P$.
\end{theorem}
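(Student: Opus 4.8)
The plan is to reduce the claim to the construction of a suitable regular thread and then invoke the expressiveness result for threads. By Proposition~\ref{prop-expressiveness} (whose construction applies verbatim once $\BAct = \BInstr$ contains the alternative choice actions), it suffices to exhibit a thread $T$ that is regular over $\set{\ac(e,\tact) \where e \in \AAct}$ and satisfies $\tau \seqc \abstr{\set{\tact}}(\pextr{T}) = \tau \seqc P$; the instruction sequence $F$ delivered by that proposition uses these basic actions only inside positive test instructions, so the only basic instructions occurring in $F$ are of the form $\ac(e,\tact)$, as required. Writing $Q(U) := \abstr{\set{\tact}}(\pextr{U}) = \abstr{\set{\stp,\tact}}(\cpextr{U})$ (abstraction operators compose), the defining equations of process extraction and axiom TI2 give $Q(\Stop) = \tau$, $Q(\DeadEnd) = \dead$, and $Q(\pcc{U}{\ac(e,\tact)}{U'}) = e \seqc Q(U) \altc \tau \seqc Q(U')$ for $e \in \AAct$.

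First I would obtain, by Proposition~\ref{prop-lin-rec-process} with $\Act' = \AAct$, a finite linear recursive specification $E$ over \ACPt\ using only atomic actions from $\AAct$, with $P$ its $X_1$-component. Discarding $\dead$-summands (A6), I may assume each equation reads $X_p = \vAltc{l \in L_p} c_{p,l} \seqc Y_{p,l}$, where every summand is either a \emph{continuing} summand $c_{p,l} \seqc X_q$ or a \emph{terminating} summand $c_{p,l}$ (no $Y_{p,l}$), with $c_{p,l} \in \AAct$. For each $p$ I would build a cyclic chain of threads by the linear recursive specification $U_{p,l} = \pcc{V_{p,l}}{\ac(c_{p,l},\tact)}{U_{p,(l \bmod r_p)+1}}$, where $V_{p,l}$ is the entry thread $U_{q,1}$ of the chain for $X_q$ in the continuing case and $\Stop$ in the terminating case (and $U_{p,1} = \DeadEnd$ when $L_p = \emptyset$). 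By Proposition~\ref{prop-lin-rec-thread} each $U_{p,1}$ is a regular thread; I set $T := U_{1,1}$.

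The heart of the argument is to compute $Q(U_{p,1})$. Each chain produces a cluster of $\tact$-steps, namely the states $\cpextr{U_{p,1}},\ldots,\cpextr{U_{p,r_p}}$, whose exits are the summands $c_{p,l} \seqc \cpextr{V_{p,l}}$; since the $\tact$-steps form a cycle, every exit is reachable from every state of the cluster, so CFAR applies (after also abstracting the termination marker $\stp$) and yields $\tau \seqc Q(U_{p,1}) = \tau \seqc \vAltc{l \in L_p} c_{p,l} \seqc Q(V_{p,l})$. The apparent obstacle is that CFAR delivers these equalities only under a guarding $\tau$, whereas the continuation threads $U_{q,1}$ occur inside the summands without such a guard. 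I expect this to be the only delicate point, and it is dissolved by the derived law $a \seqc \tau \seqc x = a \seqc x$ (an instance of B2, using A1 and A6), which lets me insert a leading $\tau$ after each action prefix. Setting $G_p := \tau \seqc Q(U_{p,1})$, and using $Q(\Stop) = \tau$ together with B1 for the terminating summands, I obtain $G_p = \tau \seqc \vAltc{l \in L_p} s_{p,l}$, where $s_{p,l} = c_{p,l} \seqc G_q$ in the continuing case and $s_{p,l} = c_{p,l}$ in the terminating case.

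Finally I would show that $\hat P_p := \tau \seqc P_p$ satisfies exactly the same equations: starting from $X_p = \vAltc{l \in L_p} c_{p,l} \seqc Y_{p,l}$ in $E$ and again applying $a \seqc \tau \seqc x = a \seqc x$ gives $\hat P_p = \tau \seqc \vAltc{l \in L_p} s_{p,l}$, with $G$ replaced throughout by $\hat P$. Since every recursive occurrence sits under a real action prefix $c_{p,l} \in \AAct$, the common specification $Z_p = \tau \seqc \vAltc{l \in L_p} s_{p,l}$ is guarded, so RSP yields $G_p = \hat P_p$ for all $p$. In particular $\tau \seqc \abstr{\set{\tact}}(\pextr{T}) = G_1 = \hat P_1 = \tau \seqc P$, which is what is needed; everything apart from the CFAR cluster bookkeeping and the insertion of guarding $\tau$'s is routine.
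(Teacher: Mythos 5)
Your proposal is correct and follows essentially the same route as the paper: reduce to threads via Proposition~\ref{prop-expressiveness}, take a linear \ACPt\ specification for $P$, turn each equation into a cyclic chain of postconditional compositions over actions $\ac(c,\tact)$ (your $U_{p,l}$ are exactly the paper's $E'$, and the processes $\cpextr{U_{p,l}}$ satisfy exactly the paper's auxiliary specification $E''$), and collapse the $\tact$-clusters with CFAR. The only difference is that you spell out the CFAR bookkeeping and the concluding RSP step, which the paper compresses into ``it follows from CFAR that $\tau \seqc \rec{X}{E} = \tau \seqc \abstr{\set{\tact}}(\rec{X}{E''})$''.
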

\begin{proof}
By Propositions~\ref{prop-lin-rec-thread}, \ref{prop-lin-rec-process}
and~\ref{prop-expressiveness}, it is sufficient to show that, for each
finite linear recursive specification $E$ over \ACPt\ in which only
atomic actions from $\AAct$ occur, there exists a finite linear
recursive specification $E'$ over \BTAac\ in which only basic actions of
the form $\ac(e,\tact)$ occur such that
$\tau \seqc \rec{X}{E} =
 \tau \seqc \abstr{\set{\tact}}(\pextr{\rec{X}{E'}})$
for all $X \in \vars(E)$.

Take the finite linear recursive specification $E$ over \ACPt\ that
consists of the recursion equations
\begin{ldispl}
X_i = e_{i 1} \seqc X_{i 1} \altc \ldots \altc e_{i k_i} \seqc X_{i k_i}
       \altc
      e'_{i 1} \altc \ldots \altc e'_{i l_i}\;,
\end{ldispl}%
where
$e_{i 1},\ldots,e_{i k_i},e'_{i 1},\ldots,e'_{i l_i} \in
 \AAct$,
for $i \in \set{1,\ldots n}$.
Then construct the finite linear recursive specification $E'$ over
\BTAac\ that consists of the recursion equations
\begin{ldispl}
X_i =
\pcc{X_{i1}}{\ac(e_{i1},\tact)}
    {(\ldots(\pcc{X_{ik_i}}{\ac(e_{ik_i},\tact)}
                 {\\ \hsp{3.1} (\pcc{\Stop}{\ac(e'_{i1},\tact)}
                       {(\ldots(\pcc{\Stop}{\ac(e'_{il_i},\tact)}
                                    {X_i})\ldots)})})\ldots)}
\end{ldispl}%
for $i \in \set{1,\ldots n}$;
and the finite linear recursive specification $E''$ over \ACPt\ that
consists of the recursion equations
\begin{ldispl}
\begin{aeqns}
X_i       & = & e_{i 1}   \seqc X_{i 1}   \altc \tact \seqc Y_{i 2}\;,
\\
Y_{i 2}   & = & e_{i 2}   \seqc X_{i 2}   \altc \tact \seqc Y_{i 3}\;,
\\ & \vdots & \\
Y_{i k_i} & = & e_{i k_i} \seqc X_{i k_i} \altc \tact \seqc Z_{i 1}\;,
\end{aeqns}
\qquad \qquad
\begin{aeqns}
Z_{i 1}   & = & e'_{i 1}   \altc \tact \seqc Z_{i 2}\;,
\\
Z_{i 2}   & = & e'_{i 2}   \altc \tact \seqc Z_{i 3}\;,
\\ & \vdots & \\
Z_{i l_i} & = & e'_{i l_i} \altc \tact \seqc X_i\;,
\end{aeqns}
\end{ldispl}%
where $Y_{i 2},\ldots,Y_{i k_i},Z_{i 1},\ldots,Z_{i l_i}$ are fresh
variables, for $i \in \set{1,\ldots n}$.
It follows immediately from the definition of the process extraction
operation that $\pextr{\rec{X}{E'}} = \rec{X}{E''}$ for all
$X \in \vars(E)$.
Moreover, it follows from CFAR that
$\tau \seqc \rec{X}{E} = \tau \seqc \abstr{\set{\tact}}(\rec{X}{E''})$
for all $X \in \vars(E)$.
Hence,
$\tau \seqc \rec{X}{E} =
 \tau \seqc \abstr{\set{\tact}}(\pextr{\rec{X}{E'}})$
for all $X \in \vars(E)$.
\qed
\end{proof}
For example, assuming that CFAR is valid, the instruction sequence
\begin{ldispl}
(\ptst{\ac(\rcv_3(\True),\tact)}  \conc \fjmp{4} \conc
 \ptst{\ac(\rcv_3(\False),\tact)} \conc \fjmp{5} \conc \fjmp{7} \conc
\\ \phantom{(}
 \ptst{\ac(\snd_4(\True),\tact)}  \conc \fjmp{5} \conc \fjmp{9} \conc
 \ptst{\ac(\snd_4(\False),\tact)} \conc \fjmp{2} \conc \fjmp{9})\rep
\end{ldispl}%
produces the reply transmission channel process $\RTC$ of which a
guarded recursive specification is given in
Section~\ref{sect-complex-protocol}.

\begin{remark}
Theorem~\ref{theorem-completeness} with
``$\tau \seqc \abstr{\set{\tact}}(\pextr{\textr{F}}) = \tau \seqc P$''
replaced by ``$\pextr{\textr{F}} = P$'' can be established if \PGA\ is
extended with multiple-reply test instructions, see~\cite{BM08i}.
In that case, the assumption that CFAR is valid is superfluous.
\end{remark}

\section{Services and Use Operators}
\label{sect-services}

An instruction sequence under execution may make use of services.
That is, certain instructions may be executed for the purpose of having
the behaviour produced by the instruction sequence affected by a
service that takes those instructions as commands to be processed.
Likewise, a thread may perform certain actions for the purpose of having
itself affected by a service that takes those actions as commands to be
processed.
The processing of an action may involve a change of state of the service
and at completion of the processing of the action the service returns a
reply value to the thread.
The reply value determines how the thread proceeds.
The use operators can be used in combination with the thread extraction
operation from Section~\ref{sect-thread-extr} to describe the behaviour
produced by instruction sequences that make use of services.
In this section, we first review the use operators, which are concerned
with threads making such use of services, and then extend the process
extraction operation to the use operators.

A \emph{service} $H$ consists of
\begin{itemize}
\item
a set $S$ of \emph{states};
\item
an \emph{effect} function $\funct{\eff}{\Meth \x S}{S}$;
\item
a \emph{yield} function
$\funct{\yld}{\Meth \x S}{\Bool \union \set{\Blocked}}$;
\item
an \emph{initial state} $s_0 \in S$;
\end{itemize}
satisfying the following condition:
\begin{ldispl}
\Forall{m \in \Meth, s \in S}
{(\yld(m,s) = \Blocked \Limpl
  \Forall{m' \in \Meth}{\yld(m',\eff(m,s)) = \Blocked})}\;.
\end{ldispl}%
The set $S$ contains the states in which the service may be, and the
functions $\eff$ and $\yld$ give, for each method $m$ and state $s$, the
state and reply, respectively, that result from processing $m$ in state
$s$.
By the condition imposed on services, once the service has returned
$\Blocked$ as reply, it keeps returning $\Blocked$ as reply.

Let $H = \tup{S,\eff,\yld,s_0}$ be  a service and let $m \in \Meth$.
Then
the \emph{derived service} of $H$ after processing $m$, written
$\derive{m}H$, is the service $\tup{S,\eff,\yld,\eff(m,s_0)}$; and
the \emph{reply} of $H$ after processing $m$, written $H(m)$, is
$\yld(m,s_0)$.

When a thread makes a request to the service to process $m$:
\begin{itemize}
\item
if $H(m) \neq \Blocked$, then the request is accepted, the reply is
$H(m)$, and the service proceeds as $\derive{m}H$;
\item
if $H(m) = \Blocked$, then the request is rejected and the service
proceeds as a service that rejects any request.
\end{itemize}

We introduce the sort $\Serv$ of \emph{services}.
However, we will not introduce constants and operators to build terms
of this sort.
The sort $\Serv$, standing for the set of all services, is considered a
parameter of the extension of \BTA\ being presented.
Moreover, we introduce, for each $f \in \Foci$, the binary \emph{use}
operator $\funct{\useop{f}}{\Thr \x \Serv}{\Thr}$.
The axioms for these operators are given in Table~\ref{axioms-use}.%
\begin{table}[!t]
\caption{Axioms for use operators}
\label{axioms-use}
\begin{eqntbl}
\begin{saxcol}
\use{\Stop}{f}{H} = \Stop                           & & \axiom{U1} \\
\use{\DeadEnd}{f}{H} = \DeadEnd                     & & \axiom{U2} \\
\use{(\pcc{x}{\Tau}{y})}{f}{H} =
\pcc{(\use{x}{f}{H})}{\Tau}{(\use{y}{f}{H})}        & & \axiom{U3} \\
\use{(\pcc{x}{g.m}{y})}{f}{H} =
\pcc{(\use{x}{f}{H})}{g.m}{(\use{y}{f}{H})}
                               & \mif f \neq g        & \axiom{U4} \\
\use{(\pcc{x}{f.m}{y})}{f}{H} = \Tau \bapf (\use{x}{f}{\derive{m}H})
                               & \mif H(m) = \True    & \axiom{U5} \\
\use{(\pcc{x}{f.m}{y})}{f}{H} = \Tau \bapf (\use{y}{f}{\derive{m}H})
                               & \mif H(m) = \False   & \axiom{U6} \\
\use{(\pcc{x}{f.m}{y})}{f}{H} = \Tau \bapf \DeadEnd
                               & \mif H(m) = \Blocked & \axiom{U7} \\
\multicolumn{2}{@{}l@{\;\;}}{
\use{(\pcc{x}{\ac(e_1,e_2)}{y})}{f}{H} =
\pcc{(\use{x}{f}{H})}{\ac(e_1,e_2)}{(\use{y}{f}{H})}} & \axiom{U8} \\
\proj{n}{\use{x}{f}{H}} = \proj{n}{\use{\proj{n}{x}}{f}{H}}
                                                    & & \axiom{U9}
\end{saxcol}
\end{eqntbl}
\end{table}
Intuitively, $\use{T}{f}{H}$ is the thread that results from processing
all actions performed by thread $T$ that are of the form $f.m$ by
service $H$.
When a basic action of the form $f.m$ performed by thread $T$ is
processed by service $H$, it is turned into the basic action $\Tau$ and
postconditional composition is removed in favour of basic action
prefixing on the basis of the reply value produced.

We add the use operators to \PGAac\ as well.
We will only use the extension in combination with the thread extraction
operation $\textr{\ph}$ and define
$\textr{\use{F}{f}{H}} = \use{\textr{F}}{f}{H}$.
Hence, $\textr{\use{F}{f}{H}}$ denotes the thread produced by $F$ if $F$
makes use of $H$.
If $H$ is a service such as an unbounded counter, an unbounded stack or
a Turing tape, then a non-regular thread may be produced.

In order to extend the process extraction operation to the use
operators, we need an extension of \ACPt\ with action renaming operators
$\aren{h}$, where $\funct{h}{\Actt}{\Actt}$ such that $h(\tau) = \tau$.
The axioms for action renaming are given in~\cite{Fok00}.
Intuitively, $\aren{h}(P)$ behaves as $P$ with each atomic action
replaced according to $h$.
We write $\aren{e' \mapsto e''}$ for the renaming operator $\aren{h}$
with $h$ defined by $h(e') = e''$ and $h(e) = e$ if $e \neq e'$.

For the purpose of extending the process extraction operation to the use
operators, $\Act$ and $\commm$ are taken such that, in addition to the
conditions mentioned at the beginning of
Section~\ref{sect-process-extr}, with everywhere $\Bool$ replaced by
$\Bool \union \set{\Blocked}$, and the conditions mentioned at the end
of Section~\ref{sect-pci}, the following conditions are satisfied:
\begin{ldispl}
\begin{aeqns}
\Act & \supseteq &
\set{\snd_\serv(r) \where r \in \Bool \union \set{\Blocked}} \union
\set{\rcv_\serv(m) \where m \in \Meth} \union
\set{\stp^*}
\end{aeqns}
\end{ldispl}%
and for all $e \in \Act$, $m \in \Meth$, and
$r \in \Bool \union \set{\Blocked}$:
\begin{ldispl}
\begin{aeqns}
\snd_\serv(r) \commm e = \dead\;,
\\
e \commm \rcv_\serv(m) = \dead\;,
\end{aeqns}
\qquad\qquad
\begin{aeqns}
\stp \commm \stp = \stp^*\;,
\\
\stp^* \commm e = \dead\;.
\end{aeqns}
\end{ldispl}%

We also need to define a set $A_f \subseteq \Act$ and a function
$\funct{h_f}{\Actt}{\Actt}$ for each $f \in \Foci$:
\begin{ldispl}
A_f =
\set{\snd_f(d) \where d \in \Meth \union \Bool \union \set{\Blocked}}
 \union
\set{\rcv_f(d) \where d \in \Meth \union \Bool \union \set{\Blocked}}\;;
\end{ldispl}%
for all $e \in \Actt$, $m \in \Meth$ and
$r \in \Bool \union \set{\Blocked}$:
\begin{ldispl}
\begin{aeqns}
h_f(\snd_\serv(r)) = \snd_f(r)\;, \\
h_f(\rcv_\serv(m)) = \rcv_f(m)\;, \\
h_f(e)             = e
 & & \mif \LAND{r' \in \Nat}{e \neq \snd_\serv(r')} \land
          \LAND{m' \in \Meth}{e \neq \rcv_\serv(m')}\;.
\end{aeqns}
\end{ldispl}%

To extend the process extraction operation to the use operators, the
defining equation concerning the postconditional composition operators
has to be adapted and a new defining equation concerning the use
operators has to be added.
These two equations are given in Table~\ref{eqns-process-extr-add},%
\begin{table}[!t]
\caption{Adapted and additional defining equations for process
  extraction operation}
\label{eqns-process-extr-add}
\begin{eqntbl}
\begin{eqncol}
\cpextr{\pcc{T}{f.m}{T'}} =
\snd_f(m) \seqc
(\rcv_f(\True) \seqc \cpextr{T} \altc
 \rcv_f(\False) \seqc \cpextr{T'} \altc
 \rcv_f(\Blocked) \seqc \dead)
\eqnsep
\cpextr{\use{T}{f}{H}} =
\aren{\stp^* \mapsto \stp}
 (\encap{\set{\stp}}
   (\encap{{A_f}}(\cpextr{T} \parc \aren{{h_f}}(\cpextr{H}))))
\end{eqncol}
\end{eqntbl}
\end{table}
where $\cpextr{H}$ is the $X_H$-component of the solution of
\begin{ldispl}
\{X_{H'} =
  \Altc{m \in \Meth}
   \rcv_\serv(m) \seqc \snd_\serv(H'(m)) \seqc X_{\derive{m} H'} \altc
  \stp \where H' \in \rDelta(H)\}\;,
\end{ldispl}%
where $\rDelta(H)$ is inductively defined as follows:
\begin{itemize}
\item
$H \in \rDelta(H)$;
\item
if $m \in \Meth$ and $H' \in \rDelta(H)$, then
$\derive{m} H' \in \rDelta(H)$.
\end{itemize}

The extended process extraction operation preserves the axioms for the
use operators.
Owing to the presence of axiom schemas with semantic side conditions in
Table~\ref{axioms-use}, the axioms for the use operators include proper
axioms, which are all of the form $t_1 = t_2$, and axioms that have a
semantic side condition, which are all of the form
$t_1 = t_2\;\mif H(m) = r$.
By that, the precise formulation of the preservation result is somewhat
complicated.
\pagebreak[2]
\begin{proposition}
\label{prop-preservation-axioms-use}
\mbox{}
\begin{enumerate}
\item
Let $t_1 = t_2$ be a proper axiom for the use operators,
and let $\alpha$ be a valuation of variables in $\Mbta$.
Then $\pextr{\alpha^*(t_1)} = \pextr{\alpha^*(t_2)}$.
\item
Let $t_1 = t_2\;\mif H(m) = r$ be an axiom with semantic side condition
for the use operators, and let $\alpha$ be a valuation of variables in
$\Mbta$.
Then $\pextr{\alpha^*(t_1)} = \pextr{\alpha^*(t_2)}$ if $H(m) = r$.
\end{enumerate}
\end{proposition}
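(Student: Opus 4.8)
The plan is to route everything through the auxiliary operation $\cpextr{\ph}$ and abstract only at the end. Since $\pextr{S} = \abstr{\set{\stp}}(\cpextr{S})$ for every thread $S$, it suffices, for each proper axiom and for each instance of a side-condition axiom \emph{under its hypothesis}, to prove the sharper equality $\cpextr{\alpha^*(t_1)} = \cpextr{\alpha^*(t_2)}$ in $\Macp$; applying the operation $\abstr{\set{\stp}}$ to both sides then yields the stated identity, exactly as in the proof of Proposition~\ref{prop-preservation-axioms}.

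The engine is a single one-step computation. For an arbitrary thread $S$ and service $H'$ I would first unfold $\cpextr{H'}$ once by RDP, so that
\[
\aren{h_f}(\cpextr{H'}) =
\Altc{m \in \Meth}
 \rcv_f(m) \seqc \snd_f(H'(m)) \seqc \aren{h_f}(\cpextr{\derive{m}H'})
\altc \stp\;,
\]
and then compute the leading steps of
\[
\cpextr{\use{S}{f}{H'}} =
\aren{\stp^* \mapsto \stp}
 (\encap{\set{\stp}}(\encap{A_f}(\cpextr{S} \parc \aren{h_f}(\cpextr{H'}))))
\]
with CM1--CM9, the communication conditions fixed in this section, D1--D4, TI1--TI4 and the renaming axioms. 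Distinguishing the six possible shapes of $S$ produces the identities $\cpextr{\use{\Stop}{f}{H'}} = \stp$, $\cpextr{\use{\DeadEnd}{f}{H'}} = \dead$, $\cpextr{\use{(\pcc{S_1}{\Tau}{S_2})}{f}{H'}} = \iact \seqc \iact \seqc \cpextr{\use{S_1}{f}{H'}}$, the matching equations with $g.m$ (for $g \neq f$) and with $\ac(e,e')$ in which the head actions survive unchanged, and, when $H'(m) = \True$, $\cpextr{\use{(\pcc{S_1}{f.m}{S_2})}{f}{H'}} = \iact \seqc \iact \seqc \cpextr{\use{S_1}{f}{\derive{m}H'}}$ (symmetrically for $\False$ and $\Blocked$). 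In each case the right-hand side is literally $\cpextr{\ph}$ of the right-hand side of the corresponding axiom in Table~\ref{axioms-use}, so these computations settle U1--U4 and U8 (part~1) and, under their side conditions, U5--U7 (part~2).

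The content of that computation is to check that only the intended transitions survive the two encapsulations and the renaming. A request $\snd_f(m)$ can proceed only by communicating with the matching summand $\rcv_f(m)$ of the service (yielding $\iact$), after which the reply $\snd_f(H'(m))$ communicates with the matching reply-test of $\cpextr{S}$ (again $\iact$); every mismatched summand is a lone action of $A_f$ and is blocked by $\encap{A_f}$. The actions $\iact$ and, for $g \neq f$, $\snd_g(\cdot)$ and $\rcv_g(\cdot)$ neither lie in $A_f \union \set{\stp}$ nor communicate with the service, hence pass unchanged through $\encap{A_f}$, $\encap{\set{\stp}}$ and $\aren{\stp^* \mapsto \stp}$ while the service stays frozen. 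A $\DeadEnd$ or $\Blocked$ branch leaves $\dead$ beside the now-unreachable service, which collapses to $\dead$. The delicate point is termination: the terminal $\stp$ of $\cpextr{S}$ can move only by synchronizing with the $\stp$-summand of the service, because $\stp \commm \stp = \stp^*$ and $\stp^* \notin \set{\stp}$ survives $\encap{\set{\stp}}$ and is renamed back to $\stp$, whereas a lone $\stp$ on either side is turned into $\dead$ by $\encap{\set{\stp}}$; this forced co-termination is exactly what makes $\cpextr{\use{\Stop}{f}{H'}}$ reduce to $\stp$ rather than to a process that can terminate with the service still active.

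For U9, the remaining proper axiom, I would argue by induction on $n$, showing $\cpextr{\proj{n}{\use{T}{f}{H}}} = \cpextr{\proj{n}{\use{\proj{n}{T}}{f}{H}}}$ and then abstracting. Here I would not expand the merge at all: since U1--U8 hold as equations in $\Mbta$ and the thread projection satisfies P0--P3, I can reduce $\use{T}{f}{H}$ and $\use{\proj{n}{T}}{f}{H}$ by the thread-level use axioms, push $\proj{n}{}$ inward by P0--P3, apply the defining clauses of $\cpextr{\ph}$, and match the residuals using the induction hypothesis for $n-1$ (with $T$ replaced by a residual thread and $H$ by a derived service). This is the counterpart of the identity $\cpextr{\proj{n}{T}} = \proj{2n}{\cpextr{T}}$ used in Proposition~\ref{prop-preservation-axioms}, adapted to the fact that an $\ac$-action contributes one atomic action rather than two. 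The main obstacle throughout is the bookkeeping of the one-step computation: one must verify that every unintended interleaving is driven to $\dead$ and absorbed by A6 and A7, so that precisely the request/reply pair, or the single co-termination, remains. A secondary point worth stressing is that $H$ may be infinite-state, so that $\cpextr{H}$ is the solution of an infinite linear specification over $\rDelta(H)$ and $\cpextr{S}$ may itself be infinite; the argument must therefore stay purely equational, unfolding the service only finitely often by RDP rather than evaluating the whole merge.
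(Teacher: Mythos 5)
Your proposal is correct and follows essentially the same route as the paper: reduce to the auxiliary operation $\cpextr{\ph}$ and verify each use axiom by a one-step expansion of the encapsulated merge of $\cpextr{T}$ with the renamed, once-unfolded service process, which the paper illustrates only for the case of U5. Your explicit treatment of the remaining cases, of the termination synchronisation via $\stp \commm \stp = \stp^*$, and of U9 merely fills in details the paper dismisses as straightforward.
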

\begin{proof}
The proof is straightforward.
We sketch the proof for axiom U5.
By the definition of the process extraction operation, it is sufficient
to show that
$\cpextr{\use{(\pcc{T}{f.m}{T'})}{f}{H}} =
 \cpextr{\Tau \bapf (\use{T}{f}{\derive{m}H})}$
if $H(m) = \True$.
In outline, this goes as follows:
\begin{trivlist}
\item
$
\begin{geqns}
\cpextr{\use{(\pcc{T}{f.m}{T'})}{f}{H}}
\\ \, {} =
\aren{\stp^* \mapsto \stp}
\\ \, \phantom{{} = {}} \;\,
 (\encap{\set{\stp}}
   (\encap{A_f}
     (\snd_f(m) \seqc
      (\rcv_f(\True) \seqc \cpextr{T} \altc
       \rcv_f(\False) \seqc \cpextr{T'} \altc
       \rcv_f(\Blocked) \seqc \dead) \parc
      \aren{h_f}(\cpextr{H}))))
\\ \, {} =
\iact \seqc \iact \seqc
\aren{\stp^* \mapsto \stp}
 (\encap{\set{\stp}}
   (\encap{A_f}(\cpextr{T} \parc \aren{h_f}(\cpextr{\derive{m}H}))))
\\ \, {} =
\cpextr{\Tau \bapf (\use{T}{f}{\derive{m}H})}\;.
\end{geqns}
$
\end{trivlist}
In the first and third step, we apply defining equations of
$\cpextr{\ph}$.
In the second step, we apply axioms of \ACPtrec\ with action renaming,
and use that $H(m) = \True$.
\qed
\end{proof}

\begin{remark}
Let $F$ be a \PGAac\ instruction sequence and $H$ be a service.
Then $\pextr{\textr{\use{F}{f}{H}}}$ is the process produced by $F$ if
$F$ makes use of $H$.
Instruction sequences that make use of services such as unbounded
counters, unbounded stacks or Turing tapes are interesting because they
may produce non-regular processes.
\end{remark}

\section{PGLD Programs and the Use of Boolean Registers}
\label{sect-PGLD-BR}

In this section, we show that all regular processes can also be produced
by programs written in a program notation which is close to existing
assembly languages, and even by programs in which no atomic action
occurs more than once in an alternative choice instruction.
The latter result requires programs that make use of Boolean registers.

A hierarchy of program notations rooted in \PGA\ is introduced
in~\cite{BL02a}.
One program notation that belongs to this hierarchy is \PGLD, a very
simple program notation which is close to existing assembly languages.
It has absolute jump instructions and no explicit termination
instruction.

In \PGLD, like in \PGA, it is assumed that there is a fixed but
arbitrary finite set of \emph{basic instructions} $\BInstr$.
The primitive instructions of \PGLD\ differ from the primitive
instructions of \PGA\ as follows: for each $l \in \Nat$, there is
an \emph{absolute jump instruction}~$\ajmp{l}$ instead of a forward jump
instruction~$\fjmp{l}$.
\PGLD\ programs have the form $u_1;\ldots;u_k$, where $u_1,\ldots,u_k$
are primitive instructions of \PGLD.

The effects of all instructions in common with \PGA\ are as in \PGA\
with one difference: if there is no next instruction to be executed,
termination occurs.
The effect of an absolute jump instruction $\ajmp{l}$ is that execution
proceeds with the $l$-th instruction of the program concerned.
If $\ajmp{l}$ is itself the $l$-th instruction, then inaction occurs.
If $l$ equals $0$ or $l$ is greater than the length of the program, then
termination occurs.

We define the meaning of \PGLD\ programs by means of a function
$\pgldpga$ from the set of all \PGLD\ programs to the set of all
closed \PGA\ terms.
This function is defined by
\begin{ldispl}
\pgldpga(u_1 \conc \ldots \conc u_k) =
(\phi_1(u_1) \conc \ldots \conc \phi_k(u_k) \conc
 \halt \conc \halt)\rep\;,
\end{ldispl}%
where the auxiliary functions $\phi_j$ from the set of all primitive
instructions of \PGLD\ to the set of all primitive instructions of
\PGA\ are defined as follows ($1 \leq j \leq k$):
\begin{ldispl}
\begin{aceqns}
\phi_j(\ajmp{l}) & = & \fjmp{l-j}       & \mif j \leq l \leq k\;, \\
\phi_j(\ajmp{l}) & = & \fjmp{k+2-(j-l)} & \mif 0   <  l   <  j\;, \\
\phi_j(\ajmp{l}) & = & \halt            & \mif l = 0 \lor l > k\;, \\
\phi_j(u)        & = & u
                    & \mif u\; \mathrm{is\;not\;a\;jump\;instruction}\;.
\end{aceqns}
\end{ldispl}%

\PGLD\ is as expressive as \PGA.
Before we make this fully precise, we introduce a useful notation.

Let $\alpha$ is a valuation of variables in $\Ipga$, and let $\alpha^*$
be the unique homomorphic extension of $\alpha$ to terms of \PGA.
Then $\alpha^*(t)$ is independent of $\alpha$ if $t$ is a closed term,
i.e.\ $\alpha^*(t)$ is uniquely determined by $\Ipga$.
Therefore, we write $t^\sIpga$ for $\alpha^*(t)$ if $t$ is a closed
term.

\begin{proposition}
\label{prop-express-PGLD}
For each closed \PGA\ term $t$, there exists a \PGLD\ program $p$ such
that $\textr{t^\sIpga} = \textr{\pgldpga(p)^\sIpga}$.
\end{proposition}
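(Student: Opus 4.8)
The plan is to bring $t$ to canonical form and then translate it into a \PGLD\ program position-by-position using fixed-size blocks, in the same spirit as the block construction in the proof of Proposition~\ref{prop-expressiveness}. First I would invoke the canonical-form theorem to assume that $t^\sIpga$ is either a finite sequence of primitive instructions $v_1,\ldots,v_n$ or an eventually periodic sequence with prefix $v_1,\ldots,v_k$ and period $v_{k+1},\ldots,v_{k+m}$. The difficulty to keep in mind throughout is the boundary mismatch between the two notations: in \PGA\ ``running out of instructions'', a zero jump, an overshooting jump, and an infinite jump chain all yield inaction ($\DeadEnd$), whereas in \PGLD\ running out of instructions yields \emph{termination}; conversely \PGA's $\halt$ must be simulated explicitly, since \PGLD\ has no termination instruction. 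The block layout is designed precisely so that none of these boundary phenomena is ever triggered implicitly.

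Concretely I would represent position $i$ by a block of three consecutive \PGLD\ primitive instructions occupying positions $3i-2,3i-1,3i$, so that all control transfer \emph{between} positions goes through explicit absolute jumps to block heads $3q-2$, and the only fall-through that ever occurs is the controlled $+1/+2$ \emph{inside} a block. For $v_i=a$ I put $a$ at $3i-2$ followed by $\ajmp{3(i+1)-2}$; for $v_i\in\{\ptst{a},\ntst{a}\}$ I keep the test instruction at $3i-2$ and put $\ajmp{3(i+1)-2}$ and $\ajmp{3(i+2)-2}$ in slots $3i-1$ and $3i$ (the \PGLD\ semantics of $\ptst{a}$ and $\ntst{a}$ then automatically route the two replies to the successors $i+1$ and $i+2$, which is exactly the routing prescribed by the $\ptst{}$- and $\ntst{}$-clauses of thread extraction); for $v_i=\halt$ I put $\ajmp{0}$, which $\pgldpga$ turns into $\halt$; and for $v_i=\fjmp{l}$ I put $\ajmp{3q-2}$, where $q$ is the target $i+l$ reduced modulo the period in the periodic case. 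Wherever a successor would lie past the end of a finite sequence, or where $l=0$, I use a self-jump $\ajmp{3i-2}$ (or $\ajmp{3i-1}$, $\ajmp{3i}$ in a branch slot), which $\pgldpga$ turns into $\fjmp{0}$, i.e.\ inaction; genuine infinite jump chains need no special treatment, since a cycle of $\ajmp$-blocks is rendered by $\pgldpga$ as a cycle of forward jumps, hence as $\DeadEnd$ on both sides. Because every block head satisfies $1\le 3q-2\le 3(k+m)$, no jump ever reaches the $\halt\,\halt$ padding introduced by $\pgldpga$, so that padding stays inert and neither spurious termination nor spurious inaction is produced.

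To establish $\textr{t^\sIpga}=\textr{\pgldpga(p)^\sIpga}$ I would index residual threads by position. Writing $T_i$ for the thread extracted from the original sequence at position $i$ and $T_i'$ for the thread extracted from $\pgldpga(p)^\sIpga$ at block head $3i-2$, I would check by a direct computation with the defining equations for $\textr{\ph}$ that the $T_i'$ satisfy exactly the equations the $T_i$ satisfy: a test block gives $T_i'=\pcc{T_{i+1}'}{a}{T_{i+2}'}$, a basic block gives $T_i'=a\bapf T_{i+1}'$, a $\halt$ block gives $T_i'=\Stop$, and a jump block gives $T_i'=T_q'$ (the intermediate jump slots perform no action and are collapsed by the $\fjmp{}$-clauses of thread extraction, exactly as $\fjmp{l}$ is collapsed in the original). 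After eliminating the jump positions this is the finite linear (hence guarded) recursive specification furnished by Proposition~\ref{prop-lin-rec-thread}, of which both $(T_i)$ and $(T_i')$ are solutions; by RSP the solution is unique, so $T_i=T_i'$ for all $i$, and in particular $T_1=T_1'$, which is the claim, since execution of both sequences starts at position $1$, i.e.\ at block head $1$.

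The main obstacle is not the bulk of the translation but getting the boundary behaviour exactly right: the block layout must guarantee that (i)~\PGA-inaction from overshooting or zero jumps, infinite jump chains, and ``running off the end'' of a finite sequence is matched by explicit self-loops, (ii)~\PGA-termination via $\halt$ is matched by $\ajmp{0}$, and (iii)~no fall-through or test-skip ever escapes its block into the $\halt\,\halt$ padding or loops incorrectly. Once the three-instruction blocks confine every implicit control transfer to the harmless $+1/+2$ inside a block, the verification of the recursion equations and the appeal to RSP are routine.
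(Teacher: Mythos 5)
Your proof is essentially correct, but it takes a genuinely different route from the paper. The paper's own proof is a one-line appeal to~\cite{BL02a}: it observes that the embeddings defined there between the program notations in the hierarchy rooted in \PGA\ compose to a behaviour-preserving map from closed \PGA\ terms to \PGLD\ programs, and leaves all details to that reference. You instead give a direct, self-contained construction: bring $t$ to canonical form, translate position-by-position into three-instruction blocks whose only implicit control transfer is the $+1/+2$ fall-through inside a block, render \PGA-inaction by self-jumps (which $\pgldpga$ turns into $\fjmp{0}$), render $\halt$ by $\ajmp{0}$, and verify the result by showing that the two families of residual threads satisfy the same finite linear recursive specification and invoking RSP. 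What your approach buys is transparency: it isolates exactly where the semantic mismatches between the two notations live (implicit termination at the end of a \PGLD\ program versus inaction in \PGA, the absence of an explicit termination instruction in \PGLD, relative versus absolute jumps, and the $\halt \conc \halt$ padding introduced by $\pgldpga$) and shows how the block discipline neutralises each of them; the RSP argument at the end is the same device the paper itself uses in Proposition~\ref{prop-lin-rec-thread}. What the paper's approach buys is brevity and reuse of machinery already verified elsewhere. Two points in your writeup are left implicit and should be stated for completeness, though neither threatens the argument: you never say what occupies the unreachable third slot of a non-test block (any instruction, e.g.\ a self-jump, will do), and the modular reduction into the period must be applied not only to jump targets $i+l$ but also to the test successors $i+1$ and $i+2$ when $i$ lies at the end of the periodic part.
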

\begin{proof}
In~\cite{BL02a}, a number of functions (called embeddings in that paper)
are defined, whose composition gives, for each closed \PGA\ term $t$, a
\PGLD\ program $p$ such that
$\textr{t^\sIpga} = \textr{\pgldpga(p)^\sIpga}$.
\qed
\end{proof}

Let $p$ be a \PGLD\ program and $P$ be a process.
Then we say that $p$ \emph{produces} $P$ if $\textr{\pgldpga(p)^\sIpga}$
produces $P$.

Below, we will write \PGLDac\ for the version of \PGLD\ in which the
additional assumptions relating to $\BInstr$ mentioned in
Section~\ref{sect-pci} are made.
As a corollary of Theorem~\ref{theorem-completeness} and
Proposition~\ref{prop-express-PGLD}, we have that all regular
processes over $\AAct$ can be produced by \PGLDac\ programs.
\begin{corollary}
\label{corollary-express-PGLDmr}
Assume that CFAR is valid in $\Macp$.
Then, for each process $P$ that is regular over $\AAct$, there exists a
\PGLDac\ program $p$ such that $p$ produces $P$.
\end{corollary}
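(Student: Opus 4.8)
The plan is to obtain the claim by composing Theorem~\ref{theorem-completeness} with Proposition~\ref{prop-express-PGLD}, observing only that the latter must be read for \PGAac\ rather than \PGA. Let $P$ be a process that is regular over $\AAct$. First I would apply Theorem~\ref{theorem-completeness} to obtain a \PGAac\ instruction sequence in which only basic instructions of the form $\ac(e,\tact)$ occur and which produces $P$. Inspecting the proof of that theorem, which goes through Proposition~\ref{prop-expressiveness}, this instruction sequence has the shape $(F_0 \conc \ldots \conc F_n)\rep$; in particular it is $t^\sIpga$ for some closed \PGAac\ term $t$, and the property delivered is $\tau \seqc \abstr{\set{\tact}}(\pextr{\textr{t^\sIpga}}) = \tau \seqc P$.

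Second, I would apply the \PGAac-version of Proposition~\ref{prop-express-PGLD} to the closed term $t$, yielding a \PGLDac\ program $p$ with $\textr{t^\sIpga} = \textr{\pgldpga(p)^\sIpga}$. Since the two instruction sequences produce one and the same thread, applying $\pextr{\ph}$ and then $\tau \seqc \abstr{\set{\tact}}(\ph)$ to both sides gives $\tau \seqc \abstr{\set{\tact}}(\pextr{\textr{\pgldpga(p)^\sIpga}}) = \tau \seqc P$. By the definition of when a \PGLDac\ program produces a process, applied with $I = \set{\tact}$, this is precisely the assertion that $p$ produces $P$, which finishes the argument.

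The only point requiring care, and the main obstacle, is the transfer of Proposition~\ref{prop-express-PGLD} from \PGA\ to \PGAac. The embeddings of~\cite{BL02a} on which that proposition rests are defined uniformly in the set $\BInstr$ of basic instructions: they translate only jump and termination instructions and leave every plain basic instruction and test instruction untouched. Enlarging $\BInstr$ with the alternative choice instructions $\ac(e_1,e_2)$ therefore does not affect the construction, and the resulting equality $\textr{t^\sIpga} = \textr{\pgldpga(p)^\sIpga}$ of extracted threads holds verbatim. Hence Proposition~\ref{prop-express-PGLD} is available for closed \PGAac\ terms as well, and the two-step composition above goes through.
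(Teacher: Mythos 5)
Your proposal is correct and follows exactly the route the paper intends: the paper presents this result as an immediate corollary of Theorem~\ref{theorem-completeness} and Proposition~\ref{prop-express-PGLD}, and your composition of the two (including the observation that the embeddings behind Proposition~\ref{prop-express-PGLD} are uniform in $\BInstr$ and hence apply unchanged to \PGAac) is precisely the argument being elided.
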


We switch to the use of Boolean registers now.
First, we describe services that make up Boolean registers.

A Boolean register service accepts the following methods:
\begin{itemize}
\item
a \emph{set to true method} $\setbr{\True}$;
\item
a \emph{set to false method} $\setbr{\False}$;
\item
a \emph{get method} $\getbr$.
\end{itemize}
We write $\Methbr$ for the set
$\set{\setbr{\True},\setbr{\False},\getbr}$.
It is assumed that $\Methbr \subseteq \Meth$.

The methods accepted by Boolean register services can be explained as
follows:
\begin{itemize}
\item
$\setbr{\True}$\,:
the contents of the Boolean register becomes $\True$ and the reply is
$\True$;
\item
$\setbr{\False}$\,:
the contents of the Boolean register becomes $\False$ and the reply is
$\False$;
\item
$\getbr$\,:
nothing changes and the reply is the contents of the Boolean register.
\end{itemize}

Let $s \in \Bool \union \set{\Blocked}$.
Then the \emph{Boolean register service} with initial state $s$, written
$\BR_s$, is the service $\tup{\Bool \union \set{\Blocked},\eff,\eff,s}$,
where the function $\eff$ is defined as follows
($b \in \Bool$):
\begin{ldispl}
\begin{geqns}
\eff(\setbr{\True},b) = \True\;,\;
\\
\eff(\setbr{\False},b) = \False\;,
\\
\eff(\getbr,b) = b\;,
\end{geqns}
\qquad\qquad
\begin{geqns}
\eff(m,b) = \Blocked & \mif m \not\in \Methbr\;,
\\
\eff(m,\Blocked) = \Blocked\;.
\end{geqns}
\end{ldispl}%
Notice that the effect and yield functions of a Boolean register service
are the same.

Let $p$ be a \PGLD\ program and $P$ be a process.
Then we say that $p$ \emph{produces} $P$ \emph{using Boolean registers}
if
$( \ldots
  (\textr{\pgldpga(p)^\sIpga}
    \useop{\br{1}} \BR_\False) \ldots \useop{\br{k}} \BR_\False)$
produces $P$ for some $k \in \Natpos$.

We have that \PGLDac\ programs in which no atomic action from $\AAct$
occurs more than once in an alternative choice instruction can produce
all regular processes over $\AAct$ using Boolean registers.
\begin{theorem}
\label{corollary-express-BR}
Assume that CFAR is valid in $\Macp$.
Then, for each process $P$ that is regular over $\AAct$, there exists a
\PGLDac\ program $p$ in which each atomic action from $\AAct$ occurs no
more than once in an alternative choice instruction such that $p$
produces $P$ using Boolean registers.
\end{theorem}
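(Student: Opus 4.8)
The plan is to take the instruction sequence produced by the construction underlying Theorem~\ref{theorem-completeness} and Corollary~\ref{corollary-express-PGLDmr} --- in which only alternative choice instructions of the form $\ac(e,\tact)$ occur, but a given $e \in \AAct$ may occur in several of them --- and to fold all occurrences of each $e$ into a single shared occurrence, routing control through Boolean registers. Concretely, I would start, by Proposition~\ref{prop-lin-rec-process}, from a finite linear recursive specification $E$ over \ACPt\ with only actions from $\AAct$ and $P = \rec{X}{E}$, and design a \PGLDac\ program $p$ together with a fixed number $k$ of Boolean registers $\br{1},\ldots,\br{k}$ whose contents encode (i)~an index of the current state $X_i$ of $E$ and (ii)~a \emph{return context} recording, while the single shared block for a given action is being run, which occurrence of that action is being simulated. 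These registers are addressed only by get and set instructions $\br{r}.\getbr$, $\br{r}.\setbr{\True}$, $\br{r}.\setbr{\False}$, which are \emph{not} alternative choice instructions and so do not count towards the single-occurrence requirement.

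Next I would describe the three kinds of code blocks of $p$, all of whose (absolute) jumps are expressible in \PGLD\ through $\pgldpga$. A \emph{state dispatcher} reads the state registers with test instructions on $\br{r}.\getbr$ and jumps to the trial chain of the indicated state. For each state $X_i$ a \emph{trial chain} mimics the postconditional chain used in the proof of Theorem~\ref{theorem-completeness}: before its $j$-th trial it writes the context encoding into the return registers and jumps to the shared block for $e_{ij}$; after the last trial it loops back to the dispatcher for $X_i$, exactly as the loop $Z_{il_i} = \ldots \altc \tact \seqc X_i$ there. For each atomic action $e \in \AAct$ occurring in $E$ there is exactly one \emph{action block} containing the unique instruction $\ac(e,\tact)$; on the $\True$ reply it reads the return registers and dispatches to the recorded successor state, and on the $\False$ reply it dispatches to the next trial. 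Because each $e$ has exactly one action block, and because the second component of every alternative choice instruction is $\tact \notin \AAct$, the program $p$ satisfies the single-occurrence requirement.

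Finally I would prove that $p$ produces $P$ using Boolean registers. Applying the use operators with $\BR_\False$ (axioms U1--U8, in the manner of Proposition~\ref{prop-preservation-axioms-use}) turns every register interaction into $\Tau$, so that $\pextr{(\ldots(\textr{\pgldpga(p)^\sIpga}\useop{\br{1}}\BR_\False)\ldots\useop{\br{k}}\BR_\False)}$ arises from a regular thread whose only non-silent actions are the $\AAct$-actions and $\tact$. It then remains to show that the inserted silent material is inert: the silent steps that immediately follow an $\AAct$-action (the get-dispatch after an action block) are absorbed by the law $e \seqc \tau \seqc y = e \seqc y$, which is derivable from B2 and A6; trailing silent steps are absorbed by B1; and the silent-step clusters formed by the trial loops --- the original $\tact$-loops now enlarged by register set/get sequences --- are eliminated by CFAR, exactly as $\tact$ is eliminated in the proof of Theorem~\ref{theorem-completeness}, yielding $\tau \seqc \abstr{I}(\pextr{\ldots}) = \tau \seqc P$ for a suitable $I \supseteq \set{\tact,\iact}$. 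The hard part will be the semantic core of this last step: establishing a simulation invariant stating that in every reachable configuration the register-encoded state equals some $X_i$ and that, after the silent steps are collapsed, the block structure offers precisely the summands $e_{i1} \seqc X_{i1} \altc \ldots \altc e'_{i1} \altc \ldots$ of the $X_i$-equation of $E$, and then discharging the enlarged silent clusters by CFAR. By contrast, the construction of $p$ and the passage of its jumps through $\pgldpga$ (Proposition~\ref{prop-express-PGLD}) are routine bookkeeping.
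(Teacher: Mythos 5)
Your proposal is correct in outline and rests on the same combinatorial idea as the paper's proof: keep a single shared occurrence of each alternative choice instruction and use Boolean registers to record which occurrence is currently being simulated, so that control can be routed back to the right continuation. The difference lies in the decomposition. The paper first reduces the whole statement to a claim about threads: by the proof of Theorem~\ref{theorem-completeness}, it suffices to show that every thread $T$ that is regular over $\BAct$ is produced, using Boolean registers, by a \PGLD\ program in which each basic action occurs at most once; such a program is obtained by transforming a canonical ``three instructions per state'' \PGLD\ program $p'$ for $T$, using $n$ Boolean registers to distinguish the different occurrences of a repeated basic instruction. Because the transformed program yields the same thread, all of the process-algebraic work (the specifications $E'$ and $E''$ and the application of CFAR) is inherited verbatim from Theorem~\ref{theorem-completeness}, and nothing about silent steps has to be re-proved. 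You instead build the single-occurrence program directly from the linear recursive specification over \ACPt\ and therefore must redo the silent-step elimination at the process level for the enlarged program --- which you correctly identify as the hard part and handle via the derived law $x \seqc \tau \seqc y = x \seqc y$, B1, and CFAR applied to the enlarged clusters. Both routes work: the paper's buys modularity (no new CFAR argument) at the price of an ``it is easy to see'' transformation step and of quietly absorbing the $\Tau$-steps introduced by the register interactions into the claimed thread equality, while yours is more self-contained but duplicates much of the machinery of Theorem~\ref{theorem-completeness}. One small omission on your side: the return context must also be able to record ``terminate'' for the summands $e'_{ij}$ of $E$ that have no successor state.
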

\begin{proof}
By the proof of Theorem~\ref{theorem-completeness} given in
Section~\ref{sect-expressiveness}, it is sufficient to show that, for
each thread $T$ that is regular over $\BAct$, there exist a \PGLD\
program $p$ in which each basic action from $\BAct$ occurs no more than
once and a $k \in \Natpos$ such that
$( \ldots
  (\textr{\pgldpga(p)^\sIpga}
    \useop{\br{1}} \BR_\False) \ldots \useop{\br{k}} \BR_\False) = T$.

Let $T$ be a thread that is regular over $\BAct$.
We may assume that $T$ is produced by a \PGLD\ program $p'$ of the
following form:
\begin{ldispl}
\ptst{a_1} \conc
 \ajmp{(3 \mul k_1 + 1)} \conc \ajmp{(3 \mul k'_1 + 1)} \conc {}
\\
\quad \vdots
\\
\ptst{a_n} \conc
 \ajmp{(3 \mul k_n + 1)} \conc \ajmp{(3 \mul k'_n + 1)} \conc {}
\\
\ajmp{0} \conc \ajmp{0} \conc \ajmp{0} \conc \ajmp{(3 \mul n + 4)}\;,
\end{ldispl}%
where, for each $i \in [1,n]$, $k_i, k'_i \in [0,n - 1]$
(cf.\ the proof of Proposition~2 from~\cite{PZ06a}).
It is easy to see that the \PGLD\ program $p$ that we are looking for
can be obtained by transforming $p'$: by making use of $n$ Boolean
registers, $p$ can distinguish between different occurrences of the same
basic instruction in $p'$, and in that way simulate $p'$.
\qed
\end{proof}

\section{Conclusions}
\label{sect-concl}

Using the algebraic theory of processes known as \ACP, we have
described two protocols to deal with the phenomenon that, on execution
of an instruction sequence, a stream of instructions to be processed
arises at one place and the processing of that stream of instructions
is handled at another place.
The more complex protocol is directed towards keeping the execution unit
busy.
In this way, we have brought the phenomenon better into the picture and
have ascribed a sense to the term instruction stream which makes clear
that an instruction stream is dynamic by nature, in contradistinction
with an instruction sequence.
We have also discussed some conceivable adaptations of the more complex
protocol.

The description of the protocols start from the behaviours produced by
instruction sequences under execution.
By that we abstract from the instruction sequences which produce those
behaviours.
How instruction streams can be generated efficiently from instruction
sequences is a matter that obviously requires investigations at a less
abstract level.
The investigations in question are an option for future work.

We believe that the more complex protocol described in this paper
provides a setting in which basic techniques aimed at increasing
processor performance, such as pre-fetching and branch prediction, can
be studied at a more abstract level than usual (cf.~\cite{HP03a}).
In particular, we think that the protocol can serve as a starting-point
for the development of a model with which trade-offs encountered in the
design of processor architectures can be clarified.
We consider investigations into this matter an interesting option for
future work.

The fact that process algebra is an area of the study of concurrency
which is considered relevant to computer science, strongly hints that
there are programmed systems whose behaviours are taken for processes
as considered in process algebra.
In that light, we have investigated the connections between programs and
the processes that they produce, starting from the perception of a
program as an instruction sequence.
We have shown that, by apposite choice of basic instructions, all
regular processes can be produced by means of instruction sequences as
considered in \PGA.

We have also made precise what processes are produced by instruction
sequences under execution that make use of services.
The reason for this is that instruction sequences under execution are
regular threads and regular threads that make use of services such as
unbounded counters, unbounded stacks or Turing tapes may produce
non-regular processes.
An option for future work is to characterize the classes of processes
that can be produced by single-pass instruction sequences that make use
of such services.

\bibliographystyle{splncs03}
\bibliography{IS}

% \par \vfill \par \noindent DRAFT of \today

\end{document}